\newtheorem{theorem}{Theorem}
\newtheorem{lemma}[theorem]{Lemma}
\newtheorem{fact}[theorem]{Fact}
\newtheorem{corollary}[theorem]{Corollary}
\newtheorem{observation}[theorem]{Observation}
\newtheorem{claim}[theorem]{Claim}
\theoremstyle{remark}
\newtheorem{remark}[theorem]{Remark}
\newtheorem{example}[theorem]{Example}
\newcommand{\dd}{\mathinner{.\,.}}
\newcommand{\Oh}{\mathcal{O}}
\newcommand{\Label}{\mathcal{L}}
\newcommand{\lcp}{\textsf{lcp}}
\newcommand{\lce}{\textsf{lce}}
\newcommand{\MS}{\textit{MS}}
\newcommand{\MP}{\operatorname{A}}
\newcommand{\countop}{\mathit{Count}}
\newcommand{\R}{\mathbf{R}}
\newcommand{\val}{\mathit{val}}
\newcommand{\idx}{\mathit{idx}}
\newlength{\unit}
\newlength{\nodemargin}
\newlength{\nodesep}
\newlength{\level}
 \newcommand{\defproblem}[3]{
  \vspace{2mm}
\noindent\fbox{
  \begin{minipage}{0.96\textwidth}
  #1\\
  {\bf{Input:}} #2  \\
  {\bf{Output:}} #3
  \end{minipage}
  }
  \vspace{2mm}
}
\begin{document}
\title{Efficient Computation of Sequence Mappability}
\author[1]{Panagiotis Charalampopoulos}
\author[2]{Costas S.\ Iliopoulos}
\author[3]{Tomasz Kociumaka}
\author[4,5]{Solon P.\ Pissis}
\author[6]{Jakub Radoszewski\thanks{Supported by the ``Algorithms for text processing with errors and uncertainties'' project carried out within the HOMING programme of the Foundation for Polish Science co-financed by the European Union under the European Regional Development Fund.}}
\author[6]{Juliusz Straszyński\protect\footnotemark[1]}

\affil[1]{\normalsize
Efi Arazi School of Computer Science, The Interdisciplinary Center Herzliya, Herzliya, Israel\\
    \texttt{panagiotis.charalampopoulos@post.idc.ac.il} 
}
\affil[2]{\normalsize
    Department of Informatics, King's College London, London, UK\\
    \texttt{c.iliopoulos@kcl.ac.uk} 
}
\affil[3]{\normalsize
      University of California, Berkeley, USA\\
      \texttt{kociumaka@berkeley.edu} 
}
\affil[4]{\normalsize
      CWI, Amsterdam, The Netherlands\\
      \texttt{solon.pissis@cwi.nl} 
}
\affil[5]{\normalsize
      Vrije Universiteit, Amsterdam, The Netherlands
}
\affil[6]{\normalsize
      Institute of Informatics, University of Warsaw, Warsaw, Poland\\
      \texttt{[jrad,jks]@mimuw.edu.pl} 
}

\date{\vspace{-1cm}}

\maketitle 
\begin{abstract}
Sequence mappability is an important task in genome resequencing.
In the $(k,m)$-mappability problem, for a given sequence $T$ of length $n$, the goal is to compute a table whose $i$th entry is the number of indices $j \ne i$ such that the length-$m$ substrings of $T$ starting at positions $i$ and $j$ have at most $k$ mismatches.
Previous works on this problem focused on heuristics computing a rough approximation of the result or on the case of $k=1$. 
We present several efficient algorithms for the general case of the problem. Our main result is an algorithm that,
for $k=\Oh(1)$, works in $\Oh(n)$ space and, with high probability, in $\Oh(n \cdot \min\{m^k,\log^k n\})$ time.
Our algorithm requires a careful adaptation of the $k$-errata trees of Cole et al.~[STOC 2004] to avoid multiple counting of pairs of substrings. 
Our technique can also be applied to solve the all-pairs Hamming distance problem introduced by Crochemore et al.~[WABI 2017].
We further develop $\Oh(n^2)$-time algorithms to compute \emph{all} $(k,m)$-mappability tables for a fixed $m$ and all $k\in \{0,\ldots,m\}$ or a fixed $k$ and all $m\in\{k,\ldots,n\}$. 
Finally, we show that, for $k,m = \Theta(\log n)$, the $(k,m)$-mappability problem cannot be solved in strongly subquadratic time unless the Strong Exponential Time~Hypothesis~fails.

This is an improved and extended version of a paper that was presented at SPIRE 2018.
\end{abstract}

\section{Introduction}
\paragraph{The $k$-mappability problem.}

Analyzing data derived from massively parallel sequencing experiments often
depends on the process of genome assembly via resequencing; namely, assembly with the help of a reference sequence. In this process, a large number of reads (or short sequences) derived from a DNA donor during these experiments must be mapped back to a reference sequence, comprising a few gigabases, to establish the section of the genome from which each read has been derived. An extensive number of short-read alignment techniques and tools have been introduced to address this challenge emphasizing on different aspects of the process~\cite{Fonseca}. 
 
In turn, the process of resequencing depends heavily on how mappable a genome is with respect to reads of some fixed length $m$. Thus, given a reference sequence, for every substring of length $m$ in the sequence, we want to count how many additional times this substring appears in the sequence when allowing for a small number $k$ of errors. This computational problem and a heuristic approach to approximate the solution were first proposed in~\cite{biopaper} (see also~\cite{ITAB2009}), where a great variance in genome mappability between species and gene classes was revealed.

More formally, for a string $T$, let $T_i^m$ denote the length-$m$ substring of $T$ that starts at position $i$. In the $(k,m)$-mappability problem, for a given string $T$ of length $n$, we are asked to compute a table $\MP^m_{\le k}$ whose $i$th entry $\MP^m_{\le k}[i]$ is the number of indices $j \ne i$ such that the substrings $T_i^m$ and $T_j^m$ are at Hamming distance at most $k$. In the previous study~\cite{biopaper}, the assumed values of parameters were $k \le 4$, $m \le 100$, and the alphabet of $T$ was $\{\mathtt{A},\mathtt{C},\mathtt{G},\mathtt{T}\}$.

\begin{example}\label{ex}
  Consider a string $T=\texttt{aababba}$ and $m=3$.
  The following table shows the $(k,m)$-mappability counts for $k=1$ and $k=2$.
  
  \begin{table}[htbp]
  \begin{center}
  \renewcommand{\arraystretch}{1.2}
    \begin{tabular}{rc|c|c|c|c|c}
      \textbf{position} & $i$ & 1 & 2 & 3 & 4 & 5 \\\hline
      \textbf{substring} & $T_i^3$ & $\texttt{aab}$ & $\texttt{aba}$ & $\texttt{bab}$ & $\texttt{abb}$ & $\texttt{bba}$ \\\hline
      \textbf{$(1,3)$-mappability}&$\MP_{\le 1}^3[i]$ & 2 & 2 & 1 & 2 & 1 \\\hline
      \textbf{$(2,3)$-mappability}&$\MP_{\le 2}^3[i]$ & 3 & 3 & 3 & 4 & 3 \\\hline
      \textbf{difference}& $\MP_{=2}^3[i]$ & 1 & 1 & 2 & 2 & 2
    \end{tabular}
  \end{center}
  \end{table}
  
For instance, consider the position 1. The $(1,3)$-mappability is 2 due to the occurrences of $\texttt{bab}$ and $\texttt{abb}$ at positions 3 and 4, respectively. The $(2,3)$-mappability is 3 since only the substring $\texttt{bba}$, occurring at position 5, has three mismatches with $\texttt{aab}$.
\end{example}

For convenience, our algorithms compute an array $\MP_{=k}^m$ whose $i$th entry $\MP_{=k}^m[i]$ is the number of positions $j\ne i$
such that substrings $T_i^m$ and $T_j^m$ are at Hamming distance \emph{exactly} $k$.
Note that $\MP_{\le k}^m[i]=\sum_{\kappa=0}^{k} \MP_{=\kappa}^m[i]$; see the ``difference'' row in the example above. Henceforth, we call this problem \emph{the $(k,m)$-mappability problem}.

\begin{table}[htbp]
    \centering
    \begin{tabular}{c|c}
        \textbf{Solution} & \textbf{Time complexity} \\\hline
        Manzini~\cite{DBLP:conf/spire/Manzini15} & $\Oh(mn \log n / \log\log n)$ \\
        Alzamel et al.~\cite{DBLP:conf/cocoa/AlzamelCIPRS17} & $\Oh(nm)$ \\
        Alzamel et al.~\cite{DBLP:conf/cocoa/AlzamelCIPRS17} & $\Oh(n \log n \log \log n)$ \\
        Alzamel et al.~\cite{DBLP:conf/cocoa/AlzamelCIPRS17} & $\Oh(n)$ on average for $m=\Omega (\log n)$ \\
        Amir et al.~\cite{barilan_kmap}, Hooshmand et al.~\cite{DBLP:conf/iccabs/HooshmandAGAT18} & $\Oh(n \log n)$ \\
        Amir et al.~\cite{barilan_kmap} & $\Oh(n)$ for $m=\Omega (\sqrt{n})$\\
    \end{tabular}
    \caption{Known algorithms for computing $(1,m)$-mappability for strings over constant-sized alphabets. All algorithms use $\Oh(n)$ space.}%
    \label{tab:1map}
\end{table}

Using the suffix array and the LCP table~\cite{DBLP:journals/siamcomp/ManberM93,DBLP:conf/cpm/KasaiLAAP01,DBLP:journals/jacm/KarkkainenSB06}, the $(0,m)$-mappability problem can be solved in $\Oh(n)$ time and space. Known solutions for computing $(1,m)$-mappability are shown in Table~\ref{tab:1map}; the $\Oh(nm)$-time and the $\Oh(n)$-average-time solutions of Alzamel et al.~\cite{DBLP:conf/cocoa/AlzamelCIPRS17} work also on strings over \emph{integer alphabets} $\{1,\dots,\sigma\}$ for $\sigma = n^{\Oh(1)}$. Moreover, the latter algorithm was shown to be generalizable to arbitrary $k$, requiring $\Oh(n)$ space and, on average, $\Oh(kn)$ time if $m=\Omega (k\log_\sigma n)$. 
In~\cite{DBLP:conf/sofsem/AlamroACIP18}, the authors introduced an efficient construction of a \emph{genome mappability array} $B_k$ in which $B_k[\mu]$ is the smallest length $m$ such that at least $\mu$ of the length-$m$ substrings of $T$ do not occur elsewhere in $T$ with at most $k$ mismatches.
This construction was further improved in~\cite{DBLP:conf/spire/AyadBCIP18}.

\paragraph{The all-pairs Hamming distance problem.}

The evolutionary relationships between different species or taxa are usually inferred
through phylogenetic analysis techniques. Some of these techniques rely on the
inference of phylogenetic trees. A first step of these techniques is to compute the distances between all pairs of sequences representing the set of species or taxa under study. This particular step, however, often dominates the running time of these methods. Depending on the application, the underlying model of evolution, and the optimality criterion, it may not be strictly necessary to be aware of the complete distance matrix (see~\cite{Francisco2009,crochemore_et_al:LIPIcs:2017:7652}, for instance). Thus, in this preprocessing step, we are only interested in pairs with distances not exceeding a given threshold. 

The computational problem can be formally defined as follows. Given a set $\R$ of $r$ length-$m$ strings and an integer $k\in\{0,\ldots, m\}$, return all pairs $(X_1,X_2)\in \R\times \R$, with $X_1\ne X_2$, such that $X_1$ and $X_2$ are at Hamming distance at most $k$. This problem has been studied in the average-case model and efficient linear-time algorithms are known under some constraints on the value of $k$ and some assumptions on the elements of $\R$~\cite{crochemore_et_al:LIPIcs:2017:7652,MAKINEN201917,Grabowski}. The indexing variant of the all-pairs Hamming distance problem has further applications in bioinformatics for querying typing databases~\cite{Carrico2018} and in information retrieval for searching similar documents in a collection~\cite{Gog:2016:FCH:2911451.2911523}.

Intuitively, the connection between the $(k,m)$-mappability problem and the all-pairs Hamming distance problem is as follows. By first concatenating the $r$ elements of $\R$ to construct a new string $T$ of length $n=rm$, solving the former considering only the $r$ substrings of $T$ starting at positions $i$, with $i \text{ mod } m=1$, and summing up the resulting values, we would obtain the total size of the output of the latter.

Henceforth we assume, as in the mappability problem, that we are to compute all pairs at Hamming distance \emph{exactly} $k$. In the end, we run the algorithm for all values of $k$ up to a given threshold of interest.

\paragraph{Our contributions.}
We present several algorithms for the general case of the $(k,m)$-mappability problem. 
More specifically, our contributions are as follows:
\begin{enumerate}
\item In Section~\ref{sec:hp}, we show a randomized Las-Vegas algorithm for the $(k,m)$-mappa\-bility problem that works in $\Oh(n\binom{\log n+k}{k}4^kk)$ time with high probability ($1-n^{-c}$ for an arbitrarily large constant parameter $c>0$) and $\Oh(n2^kk)$ space for a string over an ordered alphabet.\label{cont:alg}
It requires a careful adaptation of the technique of recursive heavy-path decompositions in a tree~\cite{DBLP:conf/stoc/ColeGL04}.
\item In Section~\ref{sec:2.5}, we show an algorithm to solve all-pairs Hamming distance problem in time $\Oh(rm+ r\binom{\log r+k}{k}4^kk \log r + \mathsf{output}\,2^kk \log r)$ and space $\Oh(rm+r2^kk \log r)$.\label{cont:allp}
\item In Section~\ref{sec:2}, we show an algorithm for the $(k,m)$-mappability problem that works in $\Oh(n k \cdot (m+1)^k)$ time and $\Oh(n)$ space for a string over an integer alphabet. Together with the first result, this yields an $\Oh(n \cdot \min\{m^k,\log^k n\})$-time and $\Oh(n)$-space algorithm for $k=\Oh(1)$.\label{cont:mk}
\item In Section~\ref{sec:3}, we show $\Oh(n^2)$-time algorithms to compute \emph{all} $(k,m)$-mappability tables for a fixed $m$ and all $k\in\{0,\ldots,m\}$, or for a fixed $k$ and all $m\in\{k,\ldots,n\}$.
\item Finally, in Section~\ref{sec:4}, we prove that the $(k,m)$-mappability problem for $k,m = \Theta(\log n)$ cannot be solved in strongly subquadratic time unless the Strong Exponential Time Hypothesis~\cite{DBLP:journals/jcss/ImpagliazzoPZ01,DBLP:journals/jcss/ImpagliazzoP01} fails.\label{cont:lb}
\end{enumerate}
In contributions~\ref{cont:alg} and~\ref{cont:lb}, we apply recent advances in the Longest Common Substring with $k$ Mismatches problem that were presented in~\cite{DBLP:conf/cpm/Charalampopoulos18} and~\cite{DBLP:journals/corr/abs-1712-08573}, respectively (see also~\cite{DBLP:journals/jcb/ThankachanAA16}). 
In particular, compared to~\cite{DBLP:conf/cpm/Charalampopoulos18}, our contribution~\ref{cont:alg} requires a careful counting of substring pairs to avoid multiple counting and a thorough analysis of the space usage. Technically this is the most involved contribution.

This work is an extended version of~\cite{DBLP:conf/spire/AlzamelCIKPRS18}. In comparison to the conference version, we improve the complexity of the main algorithm by a $\Theta(\log n)$-factor, remove the dependency 
on the alphabet size in contribution~\ref{cont:mk}, and apply our techniques to solve the all-pairs Hamming distance problem (contribution~\ref{cont:allp}).

\section{Preliminaries}

Let $T=T[1]T[2]\cdots T[n]$ be a \emph{string} of length $|T|=n$ over a finite ordered alphabet $\Sigma$ of size $|\Sigma|=\sigma$. For two positions $i$ and $j$ on $T$, the \emph{substring} (sometimes called \emph{factor}) of $T$ that starts at position $i$ and ends at position $j$ is $T[i]\cdots T[j]$ (it is of length $0$ if $j<i$). A \emph{prefix} of $T$ is a substring that starts at position 1 and a \emph{suffix} of $T$ is a substring that ends at position $n$. We denote the suffix that starts at position $i$ by $T_i$ and its prefix of length $m$ by $T_i^m$.

The \emph{Hamming distance} between two strings $S$ and $T$ of the same length $|S| = |T|$ is defined as $d_H(S, T) = |\{i\in \{1, 2,\ldots, |S|\} : S[i] \neq T[i]\}|$. If $|S| \neq |T|$, we set $d_H(S, T)=\infty$. 
  
By $\lcp(S,T)$ we denote the length of the longest common prefix of $S$ and $T$.
For a fixed string $T$, we also set $\lce(r, s)=\lcp(T_{r},T_{s})$.
By $\lce_k(r,s)$ we denote the length of the longest common prefix of $T_r$ and $T_s$ when up to $k$ mismatches are allowed, that is, the maximum $\ell$ such that $d_H(T_r^\ell,T_s^\ell) \le k$.

\paragraph{Compact trie.} A \emph{trie} of a collection of strings $C$ is a labeled tree that contains a node for every distinct prefix of a string in $C$; the root node is $\varepsilon$; the set of \emph{terminal} nodes is $C$; and edges are of the form $u\stackrel{c}{\to}uc$, where $u$ and $uc$ are nodes and $c\in\Sigma$. A compact trie $\mathbf{T}$ of a collection of strings $C$ is obtained from the trie of $C$ by dissolving all non-branching nodes, excluding the root and the terminals. The nodes of the trie which become nodes of $\mathbf{T}$ are called \emph{explicit} nodes, while the other nodes are called \emph{implicit}. Each edge of $\mathbf{T}$ can be viewed as an upward maximal path of implicit nodes starting with an explicit node. The string label of an edge is a substring of one of the strings in $C$; the label of an edge is the first letter of the edge's string label. Each node of the trie can be represented in $\mathbf{T}$ by the edge it belongs to and an index within the corresponding path. We let $\Label(v)$ denote the \emph{path-label} of a node $v$, i.e., the concatenation of the string labels of the edges along the path from the root to $v$. Additionally, $\mathbf{D}(v)= |\Label(v)|$ is the \emph{string-depth} of node~$v$.

\paragraph{Suffix tree.} The suffix tree of a string $T$ is the compact trie representing all suffixes of~$T$. 
%
%
The suffix tree of a string $T$ of length $n$ over an integer alphabet 
can be constructed in $\Oh(n)$ time~\cite{DBLP:conf/focs/Farach97}
and, after an $\Oh(n)$-time preprocessing~\cite{DBLP:conf/latin/BenderF02},
it can be used to answer $\lce(r,s)$ queries in $\Oh(1)$ time.

\paragraph{Hashing.} We use perfect hashing to implement dynamic dictionaries supporting insertions and deletions
of entries (key-value pairs), as well as to retrieve an arbitrary entry with a given key.
Technically, we maintain a single global dictionary (which may simulate multiple local dictionaries) implemented using~\cite[Theorem 1.1]{DBLP:conf/icalp/DietzfelbingerH90}.
In the preprocessing, we insert $n$ dummy entries; this incurs extra $\Oh(n)$ terms in the time and space complexities,
but also guarantees that the running time of every operation is $\Oh(1)$ with probability at least $1-n^{-c}$,
where $c > 0$ is a constant specified at initialization time.
As long as the total number of dictionary operations is polynomial in $n$, we derive 
Las-Vegas algorithms whose running times bounds hold with high probability (rather than just in expectation).
Whenever the time complexity of any algorithm in this work is superpolynomial in $n$ (which may happen for large values of $k$), we resort to naive polynomial-time solutions.

When using strings as dictionary keys,
we rely on Karp--Rabin fingerprints (polynomial hashing)~\cite{DBLP:journals/ibmrd/KarpR87} with collision probability bounded by $n^{-C}$ for strings of length at most $n$ (and a sufficiently large constant $C$).
In order to obtain Las-Vegas algorithms, we provide mechanisms for detecting collisions and resort to naive polynomial-time solutions upon detecting any.

\section{Computing Mappability in $\Oh(n \log ^k n)$ Time and $\Oh(n)$ Space}\label{sec:hp}

Our algorithm operates on so-called \emph{modified strings}. A modified string $\alpha$ is a string $U$ with a set of modifications $M$. Each element of the set $M$ is a pair of the form $(i,c)$ which denotes a substitution ``$U[i]:=c$''. We assume that no two pairs in $M$ share the same index $i$. By $\val(\alpha)$ we denote the string $U$ after all the substitutions and by $M(\alpha)$ we denote the set $M$.
The sets $M(\alpha)$ for modified strings are implemented as (functional) lists.
Whenever a modified string $\beta$ is obtained by introducing an extra modification to a modified string $\alpha$, the head of $M(\beta)$ represents the new modification whereas the tail points to $M(\alpha)$.
We always introduce modifications in the left-to-right order so that the lists $M(\alpha)$ are sorted according to the decreasing order of indices $i$.

The algorithm processes \emph{modified substrings} of $T$ that are modified strings originating from the substrings~$T_i^m$. For a modified substring $\alpha$ originating from $T_i^m$, we denote $\idx(\alpha)=i$.

\paragraph{Overview of the algorithm.}
Intuitively, the algorithm proceeds by efficiently simulating transformations
of a compact trie of modified substrings, initially containing all substrings $T_i^m$. 
The elementary transformations are guided by the \emph{smaller-to-larger} principle,
and each of them consists in copying one subtree unto its sibling, with an appropriate modification introduced to each copied substring in order to match the label of the edge leading to the sibling.
This process effectively results in registering one mismatch for a
large batch of substrings at once, and therefore lays a foundation to solve the main
problem in the aforementioned time.

More precisely, the algorithm navigates a compact trie of modified substrings.\footnote{The true course of the algorithm will not actually perform much of its
operations on a compact trie, but the intuition is best conveyed by
visualizing them this way.
}
The trie is constructed top-down recursively, and the final set of modified substrings that are present in the trie is known only when all the leaves of the trie have been reached.

In a recursive step, a node $v$ of the trie stores a set of modified substrings $\MS(v)$. Initially, the root $r$ stores all substrings $T_i^m$ in its set $\MS(r)$. The path-label $\Label(v)$ is the longest common prefix of (the values of) all the modified substrings in $\MS(v)$ and the string-depth $\mathbf{D}(v)$ is the length of this prefix. None of the strings in $\MS(v)$ contains a modification at a position greater than $\mathbf{D}(v)$. The children of $v$ are determined by subsets of $\MS(v)$ that correspond to different letters at position $\mathbf{D}(v)+1$. Furthermore, additional modified substrings with modifications at position $\mathbf{D}(v)+1$ are created and inserted into the children's $\MS$-sets. This corresponds to the intuition of copying subtrees unto their siblings.

The goal is to distribute the modified substrings into leaves and, by processing each leaf independently,
register exactly once every pair of substrings $(T_i^m, T_j^m)$ differing on exactly $k$ positions.

Now, we will describe the recursive routine for visiting a node.

\paragraph{Processing an internal node.}
Assume that our node $v$ has children $u_1,\ldots,u_a$. First, we distinguish a child of $v$ with maximum-size set $\MS$; let it be $u_1$. We will refer to this child as \emph{heavy} and to every other as \emph{light}. We will recursively branch into each child to take care of all pairs of modified substrings contained in any single subtree.

For this, we create an extra child $u_{a+1}$ so that $\MS(u_{a+1})$ contains all modified substrings from $\MS(u_2) \cup \dots \cup \MS(u_a)$ with the letters at position $\mathbf{D}(v)+1$
replaced by a common wildcard character~\$. By processing the subtree of $u_{a+1}$, we will consider pairs of modified substrings that originate from different light children.

Additionally, we insert all modified substrings from $\MS(u_2) \cup \dots \cup \MS(u_a)$ into $\MS(u_1)$, substituting the letter at position $\mathbf{D}(v)+1$ with the common letter at this position of modified substrings in $\MS(u_1)$. This transformation will take care of pairs between the heavy child and the light ones. 

Finally, the algorithm branches into the subtrees of $u_1,\ldots,u_{a+1}$. A pseudocode of this process is presented as Algorithm~\ref{algo}. Note that in the special case of a binary alphabet the child $u_{a+1}$ need not be created.
Moreover, since modified substrings with more than $k$ substitutions are irrelevant for our algorithm,
we refrain from creating them in the interest of time and space complexity.

\newcommand{\stmt}[1]{$ #1 $ \\}
\newcommand{\assign}[2]{\stmt{#1 \gets #2}}
\newcommand{\equal}[2]{$ #1 = #2 $}
\newcommand{\myforeach}[2]{\ForEach{$ #1 \in #2 $}}

\SetKwFunction{FR}{processNode}
\SetKwProg{Fn}{Procedure}{}{end}
\SetKwData{Aligned}{depth}
\SetKwData{FRP}{processNode}
\SetKwData{Tree}{$v$}
\SetKwData{Light}{light}
\SetKwData{SonsList}{children}
\SetKwData{SubstringLength}{m}
\SetKwData{HeavySon}{heavyChild}
\SetKwData{HeavyLetter} {heavyLetter}
\SetKwData{LightSon}{lightChild}
\SetKwData{MergedTree}{wildcardTree}
\SetKwData{Hash} {\$}
\SetKwData{ModifiedString} {$\alpha$}
\SetKwData{LightString} {$\alpha'$}
\SetKwData{LightStringTwo} {$\alpha''$}
\SetKwData{Index} {index}
\SetKwData{LightSons} {lightChildren}
\SetKwData{Son} {child}

\SetKwFunction{Begin}{least}
\SetKwFunction{Lcp}{lcp}
\SetKwFunction{ProcessLeaf}{processLeaf}
\SetKwFunction{SplitBy} {splitByLetter}
\SetKwFunction{PollHeaviest} {findHeaviest}
\SetKwFunction{Insert} {insert}

\SetKwInput{TreeInput}{\Tree}
\SetKwInput{InsertInput}{\Insert{\Tree, \ModifiedString}}
\SetKwInput{AlignedInput}{\Aligned}
\SetKwInput{LcpInput}{\Lcp{\Tree}}
\SetKwInput{SplitInput}{\SplitBy{\Tree, \Index}}

\begin{algorithm}[ht]
\caption{A recursive procedure of processing a trie node}\label{algo}

\Fn{\FR{\Tree}}{
\LcpInput{computes the longest common prefix of all the strings in $\MS(\Tree)$}
\InsertInput{inserts \ModifiedString into $\MS(\Tree)$}
\SplitInput{splits $\MS(\Tree)$ into groups having the same \Index-th letter,
returning a list of sets of modified substrings}
\vspace{0.3cm}

  \assign{\Aligned} {\Lcp(\Tree)}
  \If{\equal{\Aligned} {\SubstringLength}} {
    \stmt{\ProcessLeaf{\Tree}}
    \Return\\
  }
  
  \vspace{0.1cm}
  \assign{\SonsList} {\SplitBy{\Tree, {$\Aligned+1$}}}
  \assign{\HeavySon} {\PollHeaviest{\SonsList}}
  \assign{\HeavyLetter} {\val(\alpha)[\Aligned$+1$]\text{ for some }\alpha \in \HeavySon}

  \assign{\MergedTree} {\emptyset}

  \myforeach{\LightSon}{\SonsList \setminus \{\HeavySon\}}{
    \myforeach{\ModifiedString}{\LightSon}{
      \If{$|M(\ModifiedString)| < k$}{
        \assign{\LightString}{\ModifiedString}
        \assign{\LightString[\Aligned$+1$]}{\Hash}
        \stmt{\Insert{\MergedTree, \LightString}}
        \assign{\LightStringTwo}{\ModifiedString}
         \assign{\LightStringTwo[\Aligned$+1$]}{\HeavyLetter}
         \stmt{\Insert{\HeavySon, \LightStringTwo}}
      }
    }
  }

  \myforeach{\Son}{\SonsList\cup\{\MergedTree\}}{
    \stmt{\FR{\Son}}
  }
}
\end{algorithm}

\paragraph{Processing a leaf.}
Each modified substring $\alpha$ stores its index of origin $\idx(\alpha)$ and the set of modifications $M(\alpha)$. As we have seen, the substitutions introduced in the recursion are of two types: of wildcard origin and of heavy origin. For a modified substring $\alpha$, we introduce a partition $M(\alpha)=W(\alpha)\cup H(\alpha)$ into modifications of these kinds. 
For every leaf $v$, the modified substrings $\alpha\in \MS(v)$ share the same value $\val(\alpha)$, and hence $W(\alpha)$ is also the same. Finally, by $W^{-1}(\alpha)$ we denote the set $\{(j,T_{\idx(\alpha)}^m[j]) :  (j,\$)\in W(\alpha)\}$. 
We call modified substrings $\alpha,\beta\in \MS(v)$ \emph{compatible} if they satisfy the following condition:
\begin{equation}\label{eq:cond}
H(\alpha) \cap H(\beta) = \emptyset,\;\;\; W^{-1}(\alpha) \cap W^{-1}(\beta) = \emptyset, \;\;\; |H(\alpha)|+|H(\beta)|+|W(\alpha)|=k.
\end{equation}
Intuitively, $\alpha$ and $\beta$ are compatible only if the positions of modifications in $M(\alpha)\cup M(\beta)$ do not contain any position $j$ such that $T^m_{\idx(\alpha)}[j]=T^m_{\idx(\beta)}[j]$.
As proved in Lemma~\ref{lem:corr} below, for every $\alpha \in \MS(v)$, we should increment $A_{=k}^m[\idx(\alpha)]$
for each compatible $\beta\in \MS(v)$.
We next show how to efficiently count these  modified substrings using the inclusion-exclusion principle and several precomputed values, as we cannot afford to count them naively.

For convenience, let $R(\alpha)$ denote the union of disjoint sets $H(\alpha)$ and $W^{-1}(\alpha)$. For a leaf $v$, let $\countop(s,B)$ denote the number of modified substrings $\beta\in\MS(v)$ such that $|H(\beta)|=s$ and $B \subseteq R(\beta)$. All the non-zero values are stored in a hash table. They can be generated by iterating through all the subsets of $R(\beta)$ for all modified substrings $\beta\in \MS(v)$; this costs $\Oh(2^kk|\MS(v)|)$ time and space. Finally, the result for a modified substring $\alpha$ can be computed using the following direct consequence of the inclusion-exclusion principle.

\begin{lemma}\label{lem:inc_exc}
The number of modified substrings $\beta\in\MS(v)$ that are compatible with a modified substring $\alpha\in \MS(v)$ is $\sum_{B \subseteq R(\alpha)} (-1)^{|B|} \countop(k-|M(\alpha)|,B)$.
\end{lemma}
\begin{proof}
First, let $h=k-|M(\alpha)|$. We want to count the modified substrings $\beta \in \MS(v)$ that satisfy $|H(\beta)|=h$ and $R(\alpha)\cap R(\beta)=\emptyset$.
For $(i,x) \in R(\alpha)$, let $A_{(i,x)}=\{\beta \in \MS(v): |H(\beta)|=h \text{ and } (i,x) \in R(\beta)\}$.
Then, we want to compute $\countop(h,\emptyset)-|\bigcup_{(i,x) \in R(\alpha)} A_{(i,x)}|$.
By the inclusion-exclusion principle we have 
\[\left|\bigcup_{(i,x) \in R(\alpha)} A_{(i,x)}\right| = \sum_{\emptyset \neq B \subseteq R(\alpha)} (-1)^{|B|+1} \left|\bigcap_{(i,x) \in B} A_{(i,x)}\right| = \sum_{\emptyset \neq B \subseteq R(\alpha)} (-1)^{|B|+1} \countop(h,B),\]
which concludes the proof.
\end{proof}

\paragraph{Examples.} Examples of the execution of the algorithm for a binary and a ternary string can be found in Figures~\ref{fig:ex} and~\ref{fig:ex2}, respectively.

\begin{figure}[t]
\begin{center}
\definecolor{change}{rgb}{1,.3,.3}

\newcommand{\dodo}[3]{
\begin{scope}[yshift=-.07cm]
\draw (0.08,0) node[anchor=mid]{\tt #1};
  \draw (-.25, -0) node[anchor=mid]{#2};
  \draw (0.08,.23) node[anchor=mid,text=change]{\tt{#3}~};
  \end{scope}
}

\setlength{\unit}{.725cm}
\setlength{\nodemargin}{.1cm}
\setlength{\nodesep}{.42cm}
\setlength{\level}{1.5cm}

\begin{tikzpicture}
{\footnotesize

\begin{scope}[xshift=6.5\unit+3\nodemargin+2.5\nodesep]
  \draw (0,0) node[draw,rectangle,minimum width=5\unit+\nodemargin, minimum height=.65cm](eps) {};
  \begin{scope}[xshift=-2\unit]\dodo{aab}{1}{~~~}\end{scope}
  \begin{scope}[xshift=-\unit]\dodo{aba}{2}{~~~}\end{scope}
  \begin{scope}[xshift=0]\dodo{abb}{4}{~~~}\end{scope}
  \begin{scope}[xshift=\unit]\dodo{bab}{3}{~~~}\end{scope}
  \begin{scope}[xshift=2\unit]\dodo{bba}{5}{~~~}\end{scope}
  \end{scope}
  
  \begin{scope}[xshift=4.5\unit+1.5\nodemargin+1\nodesep,yshift=-\level]
     \draw (0,0) node[draw,rectangle,minimum width=5\unit+\nodemargin, minimum height=.65cm](a) {};
      
  \begin{scope}[xshift=-2\unit]\dodo{aab}{1}{~~~}\end{scope}
  \begin{scope}[xshift=-\unit]\dodo{aab}{3}{b~~}\end{scope}
  \begin{scope}[xshift=0]\dodo{aba}{2}{~~~}\end{scope}
  \begin{scope}[xshift=\unit]\dodo{aba}{5}{b~~}\end{scope}
  \begin{scope}[xshift=2\unit]\dodo{abb}{4}{~~~}\end{scope}
     
   \end{scope}
   
   -

\begin{scope}[xshift=\unit+.5\nodemargin,yshift=-3\level]
  \draw (0,0) node[draw,rectangle,minimum width=2\unit+\nodemargin, minimum height=.65cm](aab) {};
  \begin{scope}[xshift=-.5\unit]\dodo{aab}{1}{~~~}\end{scope}
  \begin{scope}[xshift=.5\unit]\dodo{aab}{3}{b~~}\end{scope}   
\end{scope}

\begin{scope}[xshift=5.5\unit+2\nodemargin+1.5\nodesep,yshift=-2\level]
  \draw (0,0) node[draw,rectangle,minimum width=5\unit+\nodemargin, minimum height=.65cm](ab) {};
  \begin{scope}[xshift=-2\unit]\dodo{aba}{2}{~~~}\end{scope}
  \begin{scope}[xshift=-\unit]\dodo{aba}{5}{b~~}\end{scope}
  \begin{scope}[xshift=0]\dodo{abb}{1}{~a~}\end{scope}
  \begin{scope}[xshift=\unit]\dodo{abb}{3}{ba~}\end{scope}
  \begin{scope}[xshift=2\unit]\dodo{abb}{4}{~~~}\end{scope}
\end{scope}

\begin{scope}[xshift=3\unit+1.5\nodemargin+\nodesep,yshift=-3\level]
   \draw (0,0) node[draw,rectangle,minimum width=2\unit+\nodemargin, minimum height=.65cm](aba) {};
   \begin{scope}[xshift=-.5\unit]\dodo{aba}{2}{~~~}\end{scope}
   \begin{scope}[xshift=.5\unit]\dodo{aba}{5}{b~~}\end{scope}   
\end{scope}

\begin{scope}[xshift=6.5\unit+2.5\nodemargin+2\nodesep,yshift=-3\level]
  \draw (0,0) node[draw,rectangle,minimum width=5\unit+\nodemargin, minimum height=.65cm](abb) {};
  \begin{scope}[xshift=-2\unit]\dodo{abb}{1}{~a~}\end{scope}
  \begin{scope}[xshift=-\unit]\dodo{abb}{2}{~~a}\end{scope}
  \begin{scope}[xshift=0]\dodo{abb}{3}{ba~}\end{scope}
  \begin{scope}[xshift=\unit]\dodo{abb}{4}{~~~}\end{scope}
  \begin{scope}[xshift=2\unit]\dodo{abb}{5}{b~a}\end{scope}
  \draw (0,-.3) node[below] {$1 \leftrightarrow 2,\quad 3 \leftrightarrow 4,\quad 4 \leftrightarrow 5$};
\end{scope}

\begin{scope}[xshift=11\unit+4.5\nodemargin+4\nodesep,yshift=-\level]
  \draw (0,0) node[draw,rectangle,minimum width=2\unit+\nodemargin, minimum height=.65cm](b) {};
  \begin{scope}[xshift=-.5\unit]\dodo{bab}{3}{~~~}\end{scope}
  \begin{scope}[xshift=.5\unit]\dodo{bba}{5}{~~~}\end{scope}

 \end{scope}

\begin{scope}[xshift=10.5\unit+4\nodemargin+3.5\nodesep,yshift=-2\level]
  \draw (0,0) node[draw,rectangle,minimum width=2\unit+\nodemargin, minimum height=.65cm](ba) {};
  \begin{scope}[xshift=-.5\unit]\dodo{baa}{5}{~b~}\end{scope}
  \begin{scope}[xshift=.5\unit]\dodo{bab}{3}{~~~}\end{scope}
\end{scope}

\begin{scope}[xshift=10\unit+3.5\nodemargin+3\nodesep,yshift=-3\level]
  \draw (0,0) node[draw,rectangle,minimum width=2\unit+\nodemargin, minimum height=.65cm](baa) {};
  \begin{scope}[xshift=-.5\unit]\dodo{baa}{3}{~~b}\end{scope}
  \begin{scope}[xshift=.5\unit]\dodo{baa}{5}{~b~}\end{scope}
 
  \draw (0,-.3) node[below] {$3 \leftrightarrow 5$};
\end{scope}

\begin{scope}[xshift=11.5\unit+4.5\nodemargin+4\nodesep,yshift=-3\level]
  \draw (0,0) node[draw,rectangle,minimum width=\unit+\nodemargin, minimum height=.65cm](bab) {};
  \begin{scope}\dodo{bab}{3}{~~~}\end{scope}
\end{scope}

\begin{scope}[xshift=12.5\unit+5.5\nodemargin+5\nodesep,yshift=-3\level]
  \draw (0,0) node[draw,rectangle,minimum width=\unit+\nodemargin, minimum height=.65cm](bba) {};
  \begin{scope}\dodo{bba}{5}{~~~}\end{scope}
\end{scope}

}

 \draw[-latex, ultra thick] (eps) -- node[near end, above left=-.15cm] {\tt a} (a);
   \draw[-latex] (a) -- node[near end, above left=-.15cm] {\tt ab} (aab);
 
  \draw[-latex, ultra thick] (a) -- node[near end, above right=-.1cm] {\tt b} (ab);
  \draw[-latex] (ab) -- node[near end, above left=-.15cm] {\tt a} (aba);
  \draw[-latex, ultra thick] (ab) -- node[near end, above right=-.1cm] {\tt b} (abb);
  
  \draw[-latex] (eps) -- node[near end, above left=-.0cm] {\tt b} (b);
  \draw[-latex, ultra thick] (b) -- node[near end, above left=-.1cm] {\tt a} (ba);
  \draw[-latex, ultra thick] (ba) -- node[near end, above left=-.1cm] {\tt a} (baa);
  \draw[-latex] (ba) -- node[near end, above right=-.15cm] {\tt b} (bab);
  \draw[-latex] (b) -- node[near end, above right=-.15cm] {\tt ba} (bba);
  
\end{tikzpicture}
\end{center}
\caption{Computation of $(2,3)$-mappability for the string $T=\texttt{aababba}$ from Example~\ref{ex}. Note that the alphabet is binary in this case, so wildcard subtrees do not need to be introduced. Edges leading to heavy children are drawn in bold. The only substitutions are from a light child to a heavy child. The letters shown above are the original letters before the substitutions. The pairs of compatible modified substrings are indicated with arrows; in the end, $A^3_{=2}[1]=A^3_{=2}[2]=1$ and $A^3_{=2}[3]=A^3_{=2}[4]=A^3_{=2}[5]=2$ as expected.}\label{fig:ex}
\end{figure}

\begin{figure}[t]
\begin{center}
\definecolor{change}{rgb}{1,.3,.3}

\newcommand{\dodo}[3]{
\begin{scope}[yshift=-.07cm]
 \draw (0.08,0) node[anchor=mid]{\tt #1};
  \draw (-.16, -0) node[anchor=mid]{#2};
  \draw (0.08,.23) node[anchor=mid,text=change]{\tt{#3}~};
\end{scope}
}
\setlength{\unit}{.55cm}
\setlength{\nodemargin}{.2cm}
\setlength{\nodesep}{.5cm}
\setlength{\level}{1.5cm}
\begin{tikzpicture}
{\footnotesize
 

\begin{scope} [xshift=6.5\unit+3.5\nodemargin+3\nodesep]
  \draw (0,0) node[draw,rectangle,minimum width=5\unit+\nodemargin, minimum height=.65cm](eps) {};
  \begin{scope}[xshift=-2\unit]\dodo{aa}{1}{~~}\end{scope}
  \begin{scope}[xshift=-\unit]\dodo{ab}{2}{~~}\end{scope}
  \begin{scope}[xshift=0]\dodo{ac}{4}{~~}\end{scope}
  \begin{scope}[xshift=\unit]\dodo{ba}{3}{~~}\end{scope}
  \begin{scope}[xshift=2\unit]\dodo{ca}{5}{~~}\end{scope}
  \end{scope}
  
  \begin{scope}[xshift=4.5\unit+2\nodemargin+1.5\nodesep,yshift=-\level]
     \draw (0,0) node[draw,rectangle,minimum width=5\unit+\nodemargin, minimum height=.65cm](a) {};
      
  \begin{scope}[xshift=-2\unit]\dodo{aa}{1}{~~}\end{scope}
  \begin{scope}[xshift=-\unit]\dodo{aa}{3}{b~}\end{scope}
  \begin{scope}[xshift=0]\dodo{aa}{5}{c~}\end{scope}
  \begin{scope}[xshift=\unit]\dodo{ab}{2}{~~}\end{scope}
  \begin{scope}[xshift=2\unit]\dodo{ac}{4}{~~}\end{scope}
   \end{scope}

\begin{scope}[xshift=2.5\unit+.5\nodemargin,yshift=-2\level]
  \draw (0,0) node[draw,rectangle,minimum width=5\unit+\nodemargin, minimum height=.65cm](aa) {};
  \begin{scope}[xshift=-2\unit]\dodo{aa}{1}{~~}\end{scope}
    \begin{scope}[xshift=-\unit]\dodo{aa}{2}{~b}\end{scope}
  \begin{scope}[xshift=0]\dodo{aa}{3}{b~}\end{scope}
  \begin{scope}[xshift=\unit]\dodo{aa}{4}{~c}\end{scope} 
  \begin{scope}[xshift=2\unit]\dodo{aa}{5}{c~}\end{scope}
  \draw (0,-0.3) node[below] {$1 \leftrightarrow 2, 1\leftrightarrow 3, 1 \leftrightarrow 4, 1 \leftrightarrow 5$};
\end{scope}

\begin{scope}[xshift=5.5\unit+1.5\nodemargin+\nodesep,yshift=-2\level]
  \draw (0,0) node[draw,rectangle,minimum width=\unit+\nodemargin, minimum height=.65cm](ab) {};
  \begin{scope}\dodo{ab}{2}{~~}\end{scope}
\end{scope}

\begin{scope}[xshift=6.5\unit+2.5\nodemargin+2\nodesep,yshift=-2\level]
  \draw (0,0) node[draw,rectangle,minimum width=\unit+\nodemargin, minimum height=.65cm](ac) {};
  \begin{scope}\dodo{ac}{4}{~~}\end{scope}
\end{scope}

\begin{scope}[xshift=8\unit+3.5\nodemargin+3\nodesep,yshift=-2\level]
  \draw (0,0) node[draw,rectangle,minimum width=2\unit+\nodemargin, minimum height=.65cm](as) {};
  \begin{scope}[xshift=-.5\unit]\dodo{a\$}{2}{~b}\end{scope}
  \begin{scope}[xshift=.5\unit]\dodo{a\$}{4}{~c}\end{scope} 
    \draw (0,-0.3) node[below] {$2 \leftrightarrow 4$};
  
\end{scope}

\begin{scope}[xshift=9.5\unit+4.5\nodemargin+4\nodesep,yshift=-2\level]
  \draw (0,0) node[draw,rectangle,minimum width=\unit+\nodemargin, minimum height=.65cm](ba) {};
  \begin{scope}\dodo{ba}{3}{~~}\end{scope}

 \end{scope}

 \begin{scope}[xshift=10.5\unit+5.5\nodemargin+5\nodesep,yshift=-2\level]
  \draw (0,0) node[draw,rectangle,minimum width=\unit+\nodemargin, minimum height=.65cm](ca) {};
  \begin{scope}\dodo{ca}{5}{~~}\end{scope}

 \end{scope}
 
 \begin{scope}[xshift=12\unit+6.5\nodemargin+6\nodesep,yshift=-2\level]
  \draw (0,0) node[draw,rectangle,minimum width=2\unit+\nodemargin, minimum height=.65cm](sa) {};
  \begin{scope}[xshift=-.5\unit]\dodo{\$a}{3}{b~}\end{scope}
  \begin{scope}[xshift=.5\unit]\dodo{\$a}{5}{c~}\end{scope} 
  \draw (0,-0.3) node[below] {$3 \leftrightarrow 5$};
 \end{scope}
}

 \draw[-latex, ultra thick] (eps) -- node[near end, above left=-.1cm] {\tt a} (a);
  \draw[-latex, ultra thick] (a) -- node[near end, above left=-.1cm] {\tt a} (aa);
  \draw[-latex] (a) -- node[near end, above right=-.15cm] {\tt b} (ab);
  \draw[-latex] (a) -- node[near end, above right=-.15cm] {\tt c} (ac);
  \draw[-latex] (a) -- node[near end, above right=-.15cm] {\tt \$} (as);

  \draw[-latex] (eps) -- node[near end, above right=-.15cm] {\tt ba} (ba);
    \draw[-latex] (eps) -- node[near end, above right=-.15cm] {\tt ca} (ca);
    \draw[-latex] (eps) -- node[near end, above right=-.15cm] {\tt \$a} (sa);

\end{tikzpicture}
\end{center}
\caption{Computation of $(1,2)$-mappability for the string $T=\texttt{aabaca}$. This example illustrates the use of wildcard symbols.  We have $A^2_{=1}[1]=4$ and $A^2_{=1}[2]=A^2_{=1}[3]=A^2_{=1}[4]=A^2_{=1}[5]=2$.}\label{fig:ex2}
\end{figure}

\paragraph{Correctness.} Let us start with an observation that lists some basic properties of our algorithm. Both parts can be shown by straightforward induction.

\begin{observation}\label{obs:basic}
\begin{enumerate}[label={\rm(\alph*)}]
\item\label{it:obs:basic:a} If a node $v$ stores modified substrings $\alpha,\beta\in \MS(v)$, then it has a descendant $v'$ with $\mathbf{D}(v')=\lcp(\val(\alpha),\val(\beta))$ and $\alpha,\beta\in \MS(v')$.
\item\label{it:obs:basic:b} Every node stores at most one modified substring originating from the same substring $T_{\ell}^m$.
\end{enumerate}
\end{observation}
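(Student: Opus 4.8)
The plan is to establish both parts by induction over the recursion tree produced by the procedure, relying on the invariants stated in the overview: that $\mathcal{D}(v)$ equals the length of the longest common prefix of $\{\val(\alpha) : \alpha\in\MS(v)\}$, that the splitting step partitions $\MS(v)$ according to the letter at position $\mathcal{D}(v)+1$, and that copying a modified substring into the heavy child or the wildcard child leaves its $\idx$ value unchanged.

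For part~(\ref{it:obs:basic:a}), I would set $\ell=\lcp(\val(\alpha),\val(\beta))$ and induct on $\ell-\mathcal{D}(v)$. Since $\alpha,\beta\in\MS(v)$ already agree on their first $\mathcal{D}(v)$ letters, we have $\mathcal{D}(v)\le \ell\le m$. If $\mathcal{D}(v)=\ell$ we take $v'=v$. Otherwise $\mathcal{D}(v)<m$, so $v$ is internal; because $\mathcal{D}(v)+1\le\ell$, the strings $\val(\alpha)$ and $\val(\beta)$ share the same letter at position $\mathcal{D}(v)+1$, and hence the splitting step sends both of them, as their original copies, into the same child $u$. As $\mathcal{D}(u)\ge\mathcal{D}(v)+1$ while $\val(\alpha),\val(\beta)$ and thus $\ell$ stay the same, the measure $\ell-\mathcal{D}(\cdot)$ strictly decreases and the inductive hypothesis applied at $u$ produces the required descendant $v'$. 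The only subtlety is to note that $\alpha$ and $\beta$ descend through a genuine split child and are not diverted into the wildcard child, which receives only newly created wildcard copies; this is automatic, since the originals are exactly the members of the split group for their common letter.

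For part~(\ref{it:obs:basic:b}), I would induct from the root downwards. At the root, $\MS(r)$ consists of the substrings $T_1^m,\dots,T_{n-m+1}^m$, which carry pairwise distinct $\idx$. Assuming $\MS(v)$ has pairwise distinct $\idx$ values, the splitting step partitions $\MS(v)$, so each split child, as well as the wildcard child (which receives a subfamily of the light children's substrings with $\idx$ preserved), again has pairwise distinct $\idx$. The only case requiring an argument is the heavy child: on top of its own block of the partition it receives heavy copies of all light modified substrings. However, the $\idx$ values appearing in the heavy child's own block are disjoint from those in the light children, precisely because a partition of a set with distinct $\idx$ keeps the blocks index-disjoint; since copying preserves $\idx$ and each light $\idx$ is contributed at most once, the $\MS$-set of the heavy child still has pairwise distinct $\idx$ after the insertions.

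The main obstacle is exactly this heavy-child step in part~(\ref{it:obs:basic:b}): a priori, inserting the light substrings into the heavy child could duplicate an index already present there. The resolution is the observation that splitting by the letter at position $\mathcal{D}(v)+1$ is a genuine partition of $\MS(v)$, so the index set of the heavy block is disjoint from the union of the light blocks; everything else is routine bookkeeping.
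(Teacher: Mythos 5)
Your proof is correct and matches the paper's approach: the paper does not spell out a proof at all, asserting only that ``both parts can be shown by straightforward induction,'' and your write-up is exactly that induction carried out in detail. In particular, you correctly identify and resolve the one point that needs care---that the heavy child's own split block and the inserted heavy copies of light substrings have disjoint $\idx$ sets because the split is a genuine partition of a set with pairwise distinct $\idx$ values.
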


The following lemma shows that Algorithm~\ref{algo} correctly computes the mappability table $\MP_{=k}^m$.

\begin{lemma}\label{lem:corr}
If $d_H(T_i^m,T_j^m)=k$, then there is exactly one leaf $v$ and exactly one pair of compatible modified substrings $\alpha,\beta\in \MS(v)$ with $i=\idx(\alpha)$ and $j=\idx(\beta)$.
Otherwise, there is no such leaf $v$ and pair $\alpha,\beta$.\end{lemma}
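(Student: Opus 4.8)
My plan is to separate the statement into a \emph{static} part (what a compatible pair must look like, regardless of how the trie is built) and a \emph{dynamic} part (that exactly one such pair is actually realized by the algorithm). For the static part, I fix a leaf $v$ and a compatible pair $\alpha,\beta\in\MS(v)$ with $\idx(\alpha)=i$, $\idx(\beta)=j$; by the leaf property $s:=\val(\alpha)=\val(\beta)$ and $W:=W(\alpha)=W(\beta)$. I would classify each position $p$ as lying in $H(\alpha)$, in $H(\beta)$, in $W$, or unmodified in both, and read off $s[p]$ in each case: a wildcard position has $s[p]=\$$; a position in $H(\alpha)\setminus H(\beta)$ has $s[p]=T_j^m[p]\neq T_i^m[p]$ (and symmetrically for $H(\beta)$); a doubly-unmodified position has $s[p]=T_i^m[p]=T_j^m[p]$. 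Feeding this into the three compatibility conditions shows that the modified positions are pairwise disjoint and coincide exactly with the mismatch set $P=\{p: T_i^m[p]\neq T_j^m[p]\}$ (the condition $W^{-1}(\alpha)\cap W^{-1}(\beta)=\emptyset$ forces each wildcard position to be a mismatch, and $H(\alpha)\cap H(\beta)=\emptyset$ together with the cardinality equation forces $|P|=k$). This already proves the contrapositive of the ``otherwise'' clause, and reduces the main case to showing that, when $d_H(T_i^m,T_j^m)=k$, exactly one ordered partition $P=H(\alpha)\sqcup H(\beta)\sqcup W$ into the three (possibly empty) parts is realized.

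For the dynamic part I would follow the pair down the trie, maintaining the invariant that the two relevant representatives travel together (agree on the first $\mathcal{D}(\cdot)$ positions) and are partially compatible. I lean on two facts: the second part of \cref{obs:basic}, so that each node holds at most one representative per $\idx$ and hence at most one candidate pair; and the stated property that no string in $\MS(v)$ is modified beyond position $\mathcal{D}(v)$, so that a modification at position $p$ can be introduced only at a node of string-depth $p-1$. Since the modified positions of a compatible pair lie in $P$, both representatives remain unmodified up to the smallest mismatch $p_1$, hence follow the unique natural (inheritance) path, and there is a unique node $v_1$ with $\mathcal{D}(v_1)=p_1-1$ at which they split.

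The crux is the case analysis at such a splitting node, which simultaneously forces the type of $p_1$ and restores the invariant. I would distinguish three cases by which child is heavy. If the child carried naturally by the $i$-representative is heavy, the $j$-representative lies in a light child and the only way to reunite them is to copy it into the heavy child, forcing $p_1\in H(\beta)$; symmetrically, the $j$-representative's child being heavy forces $p_1\in H(\alpha)$; and if both lie in light children, the only child into which both can be routed together is the wildcard subtree, forcing $p_1\in W$. In each case the competing alternatives are excluded because a representative sitting in the heavy child is never copied, and because (by the increasing-position and at-most-one-per-node facts) the modification at $p_1$ can be introduced nowhere but at $v_1$. This last point is the step I expect to be the main obstacle: it is precisely where one must rule out the two representatives meeting again through a different branch at the same position $p_1$. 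After the split the reunited pair again travels together along a unique natural path up to the next mismatch $p_2$, so the argument iterates over $p_1<\dots<p_k$.

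Finally I would verify that the partition built this way is genuinely realized and is the only one. The guard $|M(\cdot)|<k$ is never violated, since before handling the $\ell$-th mismatch each representative carries at most $\ell-1\le k-1$ modifications; the reunited pair shares a single value throughout and therefore descends to a common leaf at depth $m$; and the three compatibility conditions hold by construction, the heavy positions assigned to $\alpha$ and $\beta$ being disjoint and each wildcard position being a mismatch with differing recovered letters. Conversely, the same step-by-step forcing shows that every compatible pair with $\idx(\alpha)=i$, $\idx(\beta)=j$ must agree with the one just constructed at each $p_\ell$, which yields uniqueness of both the leaf and the pair and completes the proof.
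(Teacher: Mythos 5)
Your proposal follows essentially the same route as the paper: the same ``static'' argument that compatibility forces the modified positions of a pair to be precisely the mismatch set (which yields the ``otherwise'' clause), and the same case analysis on which child of the splitting node is heavy, organized as a step-by-step forcing argument rather than the paper's separate existence induction followed by a uniqueness proof by contradiction at the deepest point of divergence. In substance this is the paper's proof.

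However, one step is wrong as stated. In the case where both natural children of the splitting node $v_1$ are light, your claim that ``the only child into which both can be routed together is the wildcard subtree'' is false: both representatives can also be heavy-copied into the heavy child of $v_1$ and travel together there, reaching a common leaf. Neither of the two exclusion reasons you give (a representative sitting in the heavy child is never copied; a modification at $p_1$ can be introduced only at $v_1$) rules this route out, since here both representatives sit in light children and both modifications are introduced at $v_1$. What rules it out is compatibility itself: that route puts $p_1 \in H(\alpha)\cap H(\beta)$, violating the first condition of \eqref{eq:cond} --- exactly the point the paper makes explicitly (``or they both contain a modification of heavy origin, which contradicts the first part of condition~\eqref{eq:cond}''). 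Your own framework does supply the fix --- the static part reduces the count to \emph{disjoint} ordered partitions of the mismatch set, and your invariant includes partial compatibility --- but you must invoke it at this point instead of the two reasons you list. This is not cosmetic: a pair meeting twice, once in the wildcard leaf and once via double heavy-copying, is precisely the kind of multiple counting that the condition $H(\alpha)\cap H(\beta)=\emptyset$ exists to prevent, and it is the crux of why the adaptation of the Cole et al.\ technique is delicate here.
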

\begin{proof}
Suppose that $\alpha,\beta\in \MS(v)$ are compatible, $i=\idx(\alpha)$, and $j=\idx(\beta)$. Since $W^{-1}(\alpha) \cap W^{-1}(\beta) = \emptyset$, we conclude that $T_i^m$ and $T_j^m$ differ at positions of modifications in $W(\alpha) = W(\beta)$. They differ at positions of modifications in $H(\beta)$ since at the nodes corresponding to these positions, an ancestor of $\alpha$ (that is, the modified substring from which $\alpha$ originates) was in the heavy child and an ancestor of $\beta$ originated from a light child (recall that~\eqref{eq:cond} includes $H(\alpha)\cap H(\beta)=\emptyset$). Symmetrically, $T_i^m$ and $T_j^m$ differ at positions of modifications in $H(\alpha)$. In conclusion, they differ at positions of modifications in $H(\alpha) \cup H(\beta) \cup W(\alpha)$.
The three sets are disjoint, so $|H(\alpha) \cup H(\beta) \cup W(\alpha)|=|H(\alpha)| + |H(\beta)| + |W(\alpha)|=k$ by~\eqref{eq:cond}. This shows that $d_H(T_i^m,T_j^m) \ge k$. With $\val(\alpha)=\val(\beta)$, we conclude that $d_H(T_i^m,T_j^m) = k$.

For a proof in the other direction, assume that $d_H(T_i^m,T_j^m)=k$ and let $1 \le x_1 < x_2 < \cdots < x_k \le m$ be the indices where the two substrings differ. Further let $x_{k+1}=m+1$.
  
  First of all, let us show that there is at least one leaf that contains compatible modified substrings $\alpha$ and $\beta$ with $\idx(\alpha)=i$ and $\idx(\beta)=j$. 
  
  \begin{claim}
    For every $p\in \{1,\ldots,k+1\}$, there exist a node $v_p$ and modified substrings $\alpha_p,\beta_p \in \MS(v_p)$ such that:
    \begin{itemize}
      \item $\idx(\alpha_p)=i$ and $\idx(\beta_p)=j$;
      \item $\lcp(\val(\alpha_p),\val(\beta_p))=x_p-1=\mathbf{D}(v_p)$;
      \item for each position $x_1,\ldots,x_{p-1}$, both $M(\alpha_p)$ and $M(\beta_p)$ contain modifications of wildcard origin,
      or exactly one of these sets contains a modification of heavy origin;
      \item there are no other modifications in $M(\alpha_p)$ or $M(\beta_p)$.
    \end{itemize}
  \end{claim}
  \begin{proof}[of Claim]
    The proof goes by induction on $p$. As $\alpha_1$ and $\beta_1$, we take modified substrings such that $\idx(\alpha_1)=i$, $\idx(\beta_1)=j$, and $M(\alpha_1)=M(\beta_1)=\emptyset$.
    They are stored in the set $\MS(r)$ for the root $r$, so Observation~\ref{obs:basic}\ref{it:obs:basic:a} guarantees the existence of a node $v_1$ with $\mathbf{D}(v_1)=\lcp(\alpha_1,\beta_1)$ and $\alpha_1,\beta_1\in \MS(v_1)$. 
    
    Let $p>1$. By the inductive hypothesis, the set $\MS(v_{p-1})$ contains modified substrings $\alpha_{p-1}$ and $\beta_{p-1}$. The node $v_{p-1}$ has children $w_1$, $w_2$ corresponding to letters $T_i^m[x_{p-1}]$ and $T_j^m[x_{p-1}]$, respectively. If $w_1$ is the heavy child, then $w_2$ is a light child and a modified substring $\beta'$ such that $\idx(\beta')=j$ and $M(\beta')=M(\beta_{p-1}) \cup \{(x_{p-1},T_i^m[x_{p-1}])\}$ is created for the recursive call in $w_1$. Then, we take $\alpha'=\alpha_{p-1}$. The case that $w_2$ is the heavy child is symmetric. Finally, if both $w_1$ and $w_2$ are light children, a child $u$ of $v_{p-1}$ is created along the wildcard symbol \$. There exist modified substrings $\alpha',\beta' \in \MS(u)$ such that: $\idx(\alpha')=i$, $\idx(\beta')=j$, $M(\alpha')=M(\alpha_{p-1})\cup\{(x_{p-1},\$)\}$, and $M(\beta')=M(\beta_{p-1})\cup\{(x_{p-1},\$)\}$.
    
    In either case, we have $\lcp(\val(\alpha'),\val(\beta'))=x_p-1$. The set $(M(\alpha')\cup M(\beta')) \setminus (M(\alpha_{p-1}) \cup M(\beta_{p-1}))$ contains either a modification of heavy origin in one of the modified substrings or modifications of wildcard origin in both. Hence, by the inductive hypothesis, we can set $\alpha_p=\alpha'$ and $\beta_p=\beta'$. The node $v_p$ with $\mathbf{D}(v_p)=\lcp(\val(\alpha_p),\val(\beta_p))$ and $\alpha_p,\beta_p \in \MS(v_p)$
    must exist due to Observation~\ref{obs:basic}\ref{it:obs:basic:a}.
   \end{proof}
  
  Applied for $p=k+1$, the claim yields a leaf $v_{k+1}$ that contains compatible modified substrings $\alpha=\alpha_{k+1}$ and $\beta=\beta_{k+1}$.
  
  Now it suffices to check that there is no other pair of compatible modified substrings $(\alpha',\beta') \ne (\alpha,\beta)$ that would be present in some leaf $u$ and satisfy $\idx(\alpha')=i$ and $\idx(\beta')=j$. Let us first note that $M(\alpha') \cup M(\beta')$ must contain modifications at positions $x_1,\ldots,x_k$ (since $\val(\alpha')=\val(\beta')$) and no modifications at other positions (otherwise, $|H(\alpha')|+|H(\beta')|+|W(\alpha')|$ would exceed $k$). Let $p$ be the greatest index in $\{1,\ldots,k+1\}$ such that $x_p-1 \le \lcp(\val(\alpha),\val(\alpha'))$. By Observation~\ref{obs:basic}\ref{it:obs:basic:b}, $u \ne v_{k+1}$, so $p\le k$.
  
  Thus, the node $v_p$ is an ancestor of the leaf $u$, but the node $v_{p+1}$ is not. Let us consider the children $w_1$, $w_2$ of $v_p$ obtained by following edges with labels $T_i^m[x_{p}]$ and $T_j^m[x_{p}]$, respectively. If $w_1$ is the heavy child, $\beta'$ must contain a modification of heavy origin at position $x_{p}$, so $v_{p+1}$ is an ancestor of $u$; a contradiction. The same contradiction is obtained in the symmetric case that $w_2$ is the heavy child. Finally, if both $w_1$ and $w_2$ are light, then either both $\alpha'$ and $\beta'$ contain a modification of wildcard origin at position $x_{p}$, which again gives a contradiction, or they both contain a modification of heavy origin, which contradicts the first part of condition~\eqref{eq:cond}.
 \end{proof}

\begin{remark}
  The recursive approach presented above is somewhat similar to the scheme used by Thankachan et al.~\cite{DBLP:journals/jcb/ThankachanAA16} for computing the longest common substring with up to $k$ mismatches of two strings. We attempted to adapt the approach of~\cite{DBLP:journals/jcb/ThankachanAA16} to computing $k$-mappability, but failed due to multiple counting of substring pairs, e.g., for $T=\mathtt{aabbab}$, $k=2$, $m=3$. Another virtue of our approach is that we obtain time complexity better by a factor of $k!$ for super-constant $k$.
\end{remark}

\paragraph{Implementation and complexity.}
Our Algorithm~\ref{algo}, excluding the counting phase in the leaves, has exactly the same structure as Algorithm~1 in~\cite{DBLP:conf/cpm/Charalampopoulos18}. Proposition~13 from~\cite{DBLP:conf/cpm/Charalampopoulos18} provides a bound on the total number of the generated modified strings and an efficient implementation based on finger-search trees. We apply that proposition for a family $\mathbf{F}$ composed of substrings $T_i^m$ to obtain the following bounds.

\begin{fact}[{see~\cite[Proposition~13]{DBLP:conf/cpm/Charalampopoulos18}}]\label{fct:time}
  Algorithm~\ref{algo} applied up to the leaves takes\linebreak $\Oh(n\binom{\log n+k+1}{k+1}2^k)$ time
  and generates $\Oh(n\binom{\log n+k}{k}2^k)$ modified substrings.
\end{fact}

Let us further analyze the space complexity of the algorithm.

\begin{lemma}\label{lem:space}
Algorithm~\ref{algo} applied up to the leaves uses $\Oh(nk)$ working space.
\end{lemma}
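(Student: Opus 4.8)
The plan is to charge the working space to the number of modified substrings that are simultaneously stored during the depth-first traversal realised by Algorithm~\ref{algo}. At any instant the memory holds the set $\MS$ of the node currently being processed together with the $\MS$-sets of the sibling subtrees that hang off the current root-to-node path and have not yet been visited; the sets of the ancestors themselves have already been consumed by their split step. The first thing I would observe is that each element of an $\MS$-set can be kept in $\Oh(1)$ space: for every $\alpha \in \MS(v)$ we have $\val(\alpha)[1\dd\mathcal{D}(v)]=\Label(v)$, so the modified positions of $\alpha$ are exactly the positions $p\le \mathcal{D}(v)$ at which $T_{\idx(\alpha)}^m$ disagrees with $\Label(v)$, and the substituted values are those of $\Label(v)$; thus $M(\alpha)$ (and its cardinality, used in the test $|M(\alpha)|<k$) is implicit from $\idx(\alpha)$ and the node, and it suffices to store $\idx(\alpha)$. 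It therefore remains to bound the number of simultaneously alive modified substrings by $\Oh(nk)$.

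To do this I would fix the order in which the recursion descends so that the heavy child is visited \emph{last}. With this convention a descent into the heavy child happens only after all light children and the wildcard child have been processed and their memory released, so a heavy step contributes no live siblings at all; live siblings can arise only at a \emph{light} step or a \emph{wildcard} step, and at such a step they total at most the combined size of the still-unprocessed children, i.e.\ $\Oh(|\MS(w)|)$ for the node $w$ in question. The two kinds of steps are then bounded separately. Since the heavy child is always at least as large as any light child, descending into a light child at least halves $|\MS|$; and $|\MS|$ is non-increasing along any path, because every child—even after the insertion of the heavy and wildcard copies—has size at most that of its parent's set (by Observation~\ref{obs:basic} each child holds at most one substring per origin, and the copies inserted into the heavy child number at most the light substrings). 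Hence the values of $|\MS|$ at the light steps of a fixed path decrease geometrically and sum to $\Oh(n)$. Each wildcard step inserts one fresh wildcard modification into every substring it carries, and, by the guard $|M(\alpha)|<k$, no substring is ever copied once it already has $k$ modifications; thus every path contains at most $k$ wildcard steps, each contributing $\Oh(n)$ live siblings, for a total of $\Oh(nk)$. Adding the current node's own set, at most $\Oh(nk)$ substrings are alive at once, which together with the $\Oh(1)$-per-substring encoding gives the claimed $\Oh(nk)$ bound.

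The step I expect to be the main obstacle is controlling the substrings produced by copying—the heavy copies inserted into the heavy child and the wildcard copies forming the wildcard subtree. These can make a single $\MS$-set as large as its parent's, and a naive traversal can keep $\Theta(k)$ sibling sets of size $\Theta(n)$ alive at once (a wildcard chain whose nodes each have a heavy sibling of size $\Theta(n)$), so the heavy-last order is essential: it confines all accumulation to the geometrically shrinking light levels and the at-most-$k$ wildcard levels, and in particular prevents any blow-up along an arbitrarily long purely-heavy path, which the opposite (heavy-first) order would suffer once copies are present. Equally essential is the implicit $\Oh(1)$ encoding of modifications, without which the $\Theta(nk)$ simultaneously alive substrings would cost $\Theta(nk^2)$. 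Finally, I would verify that both ingredients are compatible with the finger-search-tree implementation that realises the time bound of Fact~\ref{fct:time}, so that the space analysis and the time analysis refer to the same concrete data structure.
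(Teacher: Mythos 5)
Your proof is correct, but it takes a genuinely different route from the paper's. The paper's proof is an order-independent potential argument: it defines $\Phi(v)=C\sum_{\alpha\in\MS(v)}(k+1-|M(\alpha)|)$ and shows by induction that the space of the recursive call at $v$ is at most $\Phi(v)$, because for each kind of child (light, heavy, wildcard) the potential drop $\Phi(v)-\Phi(u_i)$ covers the $\Oh\bigl(\sum_{i'\ne i}|\MS(u_{i'})|\bigr)$ cost of the pending siblings; the root value $\Phi(r)=C(k+1)n=\Oh(nk)$ then gives the bound without touching the order in which children are visited. You instead fix a \emph{heavy-last} schedule and directly account for the live sibling sets along the current root-to-node path: light steps decay geometrically and sum to $\Oh(n)$, there are at most $k$ wildcard steps per path, each costing $\Oh(n)$, and heavy steps cost nothing. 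This is a valid, more concrete argument, and your explicit $\Oh(1)$-per-substring encoding (recovering $M(\alpha)$ from $\idx(\alpha)$ and $\Label(v)$) makes rigorous a point the paper leaves implicit when it charges only $\Oh(|\MS(v)|)$ words for storing $\MS(v)$. Two caveats. First, your side remark that the heavy-first order ``would suffer'' a blow-up is wrong: the paper's potential argument is oblivious to the visiting order, so heavy-first also stays within $\Oh(nk)$; concretely, along a purely heavy path every light substring that is copied onward gains a modification, so the total size of the pending light and wildcard siblings telescopes to at most $(k+1)n$. Your reordering is therefore sufficient but not necessary, and the bound is a property of the recursion tree rather than of the schedule. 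Second, since the lemma is about Algorithm~\ref{algo} as implemented for Fact~\ref{fct:time}, your approach carries the extra (easy, but real) obligation of checking that reordering children is compatible with that implementation --- an obligation the paper's schedule-free argument avoids entirely.
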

\begin{proof}
We assume that, upon termination, the procedure \FR discards the set $\MS(v)$
and all the modified strings created during its execution.
This way, the whole memory allocated within a given call to \FR
is freed.
Since \FR returns no output and its only side effects are updates of the array $A_{=}^k$,
no information is lost through such garbage collection.

A call to $\FR(v)$ for node $v$ partitions the list $\MS(v)$ into sublists corresponding to $u_1,\dots,u_a$, creates $2(|\MS(u_2)|+\cdots+|\MS(u_a)|)$ new modified substrings (each requiring constant space to be stored), appends them to sublists corresponding to $u_1$ and $u_{a+1}$, and then recurses on the sublists.
In particular, the elements of the original list $\MS(v)$ are not copied but reused in the recursive call.
The following observation provides further characterization of these elements:

\begin{observation}\label{obs:temp}
If a node $v$ is a child of $w$, then every element of $\MS(v)$ is either an element of $\MS(w)$ or a modified substring originating from an element of $\MS(w)$.
\end{observation}

Let us consider a root-to-leaf path $\rho$ in the recursion. Each recursive call uses $\Oh(1)$ local variables, which take $\Oh(n)$ space overall. 
We also need to bound the total number of modified substrings created by calls to \FR for nodes on the path~$\rho$.

By Observations~\ref{obs:temp} and~\ref{obs:basic}\ref{it:obs:basic:b}, $|\MS(v)|$ is non-increasing on $\rho$.
Moreover, if $v$ is a light child of its parent $w$, then $|\MS(v)|\leq |\MS(w)|/2$.
Let us consider all nodes $w$ on $\rho$ such that the unique child of $w$ that is on $\rho$ is a light child.
The total number of  modified strings created by the calls to $\FR(w)$ for all such nodes $w$ is $\Oh(n)$ since we can upper bound it by a geometric series that sums to $\Oh(n)$.

As for the calls to $\FR(w)$ for the remaining nodes on $\rho$, for every two modified strings they create, they put one of them in the child of $w$ that also belongs to $\rho$.
Hence, it suffices to upper bound the total number of modified substrings originating from $T_i^m$ for each position $i$ that are in $\MS(v)$ for some node $v$ on $\rho$.
For a given position $i$, let $\alpha_1,\dots,\alpha_b$ be all such modified substrings originating from $T_i^m$.
By Observation~\ref{obs:temp}, we have $M(\alpha_1) \subsetneq M(\alpha_2) \subsetneq \dots \subsetneq M(\alpha_b)$ and thus $b \le k$.
In total, we create $\Oh(nk)$ modified substrings in calls to $\FR$ on nodes of $\rho$.
\end{proof}

Next, we show how to improve the time complexity of Algorithm~\ref{algo} by a relatively small change in its execution. Intuitively, we will take advantage of the fact that the modified substrings in a leaf of the recursion do not need to be sorted lexicographically.

Namely, whenever a modified substring $\beta$ with exactly $k$ modifications is created at a node $v$ (i.e., $|M(\alpha)|=k-1$ in the if-statement), we do not include $\beta$ in the recursive call of \MergedTree or \HeavySon. Instead, an entry $(\val(\beta),\beta)$ is inserted into a global hash table. When processing a leaf $v$ containing  modified substrings with a common value $\val(\alpha)$, we need to move all modified substrings with value $\val(\alpha)$ from the global hash table to the set $\MS(v)$.
Finally, if any modified string $\beta$ created while processing a given node $v$
remains in the hash table upon completion of $\FR(v)$, then $\beta$ is removed from the hash table together with all other modified substrings with the value $\val(\beta)$. At this moment, an artificial leaf of the recursion containing all these modified substrings is created and the standard routine is applied to process this leaf.

Recall that the hash table uses Karp--Rabin fingerprints to index strings and collisions could incur incorrect results in the algorithm. To tackle this issue, whenever a modified substring $\beta$ is inserted to the hash table and there is another modified substring with the same hash in the table, we pick any one such modified substring $\alpha$ and check if $\val(\alpha)=\val(\beta)$ in $\Oh(k)$ time using $\lce$ queries on $T$ with a method that resembles kangaroo jumping~\cite{DBLP:journals/tcs/GalilG87,DBLP:journals/tcs/LandauV86} (it requires $\Oh(n)$-time preprocessing). By Lemma~\ref{lem:space}, the hash table contains up to $\Oh(nk)$ entries at any given time,
so the collision probability is $\Oh(nk \cdot n^{-C})=\Oh(n^{-C+2})$. Setting $C>c+2$,
we can make sure that this is dominated by the probability that the hash table fails to process the underlying insertion in $\Oh(1)$ time.

Let us call the resulting algorithm Algorithm~1'.

\begin{lemma}\label{lem:time}
The outputs of Algorithms~\ref{algo} and 1' are the same. Moreover, Algorithm~1' works in time $\Oh(n\binom{\log n+k}{k}2^kk)$ with high probability (up to the leaves) and uses the same amount of space as Algorithm~\ref{algo}.
\end{lemma}
\begin{proof}
Let $v$ be a leaf in the recursion of Algorithm~\ref{algo}. If $\MS(v)$ contains at least one modified substring with up to $k-1$ modifications, $v$ will be identified by the recursive procedure of Algorithm~1'. Then, all modified substrings with exactly $k$ modifications that belong to $v$ are populated from the global hash table. If $\MS(v)$ does not contain any modified substring with less than $k$ modifications, $v$ will be identified upon a deletion from the global hash map at the lowest internal node $u$ of the recursion in which a modified substring belonging to $\MS(v)$ was created. Here, we use the fact that the path-labels $\Label(u)$ of all nodes $u$ of the recursion are different. This shows that indeed the leaves of the recursion of Algorithms~\ref{algo} and 1' are the same.

As for the time complexity, the total number of modified substrings created by Algorithm~1' is the same as in Algorithm~\ref{algo}, i.e., $\Oh(n\binom{\log n+k}{k}2^k)$ by Fact~\ref{fct:time}. However, the time necessary to conduct the whole recursive procedure corresponds to the time complexity of Algorithm~\ref{algo} that is run with $k-1$ instead of $k$, i.e., also $\Oh(n\binom{\log n+k}{k}2^k)$ by Fact~\ref{fct:time}. After $\Oh(n)$-time preprocessing, for each modified substring, we can compute its Karp--Rabin fingerprint and check collisions in $\Oh(k)$ time; this accounts for the additional factor $k$ in the time complexity.

Finally, the space complexity stays the same because modified substrings with exactly $k$ modifications are removed from the hash table at latest when the recursion rolls back.
\end{proof}

Lemmas~\ref{lem:space} and~\ref{lem:time} yield the complexity of Algorithm~1'.
Note that, due to the application of the inclusion-exclusion principle in the leaves, we need to multiply the time complexity of the algorithm by $2^k$ and increase the space complexity by $\Oh(n2^kk)$.

\begin{theorem}\label{the:hp}
There exists a Las-Vegas randomized algorithm that computes the $(k,m)$-mappability of a given length-$n$
string in $\Oh(n2^kk)$ space and, with high probability, in $\Oh(n\binom{\log n+k}{k}4^kk)$ time.
For $k = \Oh(1)$, the space is $\Oh(n)$ and the time becomes $\Oh(n \log^{k} n)$.
\end{theorem}

\section{All-Pairs Hamming Distance Problem}\label{sec:2.5}
We will show how the previous algorithm can be modified to solve the all-pairs Hamming distance problem, at the cost of an additional $\log r$-factor in the complexity. We run the algorithm from the previous section for $T$ being a concatenation of all the strings in $\R$ and only with substrings $\{T_i^m\,:\,i \text{ mod } m=1\}$ in the root. The algorithm needs to be updated only at the leaves of the compact trie. Henceforth, let us consider a trie leaf $v$ with a set $\MS(v)=\{\beta_1,\ldots,\beta_p\}$ of modified substrings. We will further denote this set as $\MS$ ($|\MS|=p$). Our goal is to list, for every $\beta \in \MS$, all $\beta' \in \MS$ that are compatible with $\beta$.

\newcommand{\sset}{\mathit{set}}

Let us construct a static balanced binary search tree (BST) in which the leaves correspond to the modified substrings $\beta_i$. In this way, each node of the BST corresponds to a set of subsequent candidates from the leaves of its subtree. If $\beta_i,\ldots,\beta_j$ are the modified substrings in the leaves of the subtree of a BST node $u$, then we denote $\sset(u) = \{\beta_i,\ldots,\beta_j\}$. A leaf will be responsible for storing information only for itself and an internal node stores merged information of its children.

Our goal is to store information in each node $u$ of the BST in such a way that for any modified substring $\alpha \in \MS$ we will be able to answer if there is any other candidate in $\sset(u)$ that is compatible with $\alpha$. Therefore, in each node $u$, we will compute all the required machinery for using the inclusion-exclusion principle on the modified substrings in $\sset(u)$, that is, a hashmap that stores all non-zero values of $\countop(s,B)$ for modified substrings $\beta \in \sset(u)$. Since every $\beta \in \MS$ is present in $\Oh(\log p)$ sets $\sset(u)$, precomputing all mentioned information can be done in $O(2^kk p \log p)$ time and space.

Our query algorithm for a given modified substring $\beta$ is a recursive procedure starting at the root of the BST.\@ Assume that the algorithm is at some BST node $u$. We use Lemma~\ref{lem:inc_exc} and the hashmap for $\sset(u)$ to count the elements $\beta' \in \sset(u)$ that are compatible with $\beta$. If this number is positive, the algorithm recursively descends to the children of node $u$. In the end, modified substrings $\beta'$ that are compatible with $\beta$ will be listed at the leaves of the BST.\@ The correctness of this algorithm follows from Lemma~\ref{lem:corr}. 

Every application of Lemma~\ref{lem:inc_exc} takes $\Oh(2^kk)$ time. For each modified substring $\beta'$ that is compatible with a modified substring $\beta$, the algorithm will visit $\Oh(\log p)$ BST nodes, which gives $\Oh(2^kk \log p)$ time for finding each compatible modified substring $\beta' \in \MS$. Note that $p \le r$ (see Observation~\ref{obs:basic}\ref{it:obs:basic:b}). Summing up over all trie nodes $v$ and applying Lemmas~\ref{lem:time} and~\ref{lem:space}, we obtain the following result.
(Observe that~\cite[Proposition~13]{DBLP:conf/cpm/Charalampopoulos18} is applied for a family $\mathbf{F}$ of size $r$ rather than $n$.)
  
\begin{theorem}\label{the:hd}
There exists a Las-Vegas randomized algorithm that, given a set of $r$ length-$m$ strings and an integer $k$, 
solves the all-pairs Hamming distance problem in $\Oh(rm+2^kk r\log r)$ space and, with high probability, in $\Oh(rm + r\binom{\log r+k}{k}4^kk\log r+ \mathsf{output}\,\cdot 2^kk \log r)$ time.
For $k = \Oh(1)$, the space is $\Oh(rm+r\log r)$ and the time becomes $\Oh(rm + r \log^{k+1} r+ \mathsf{output}\cdot\log r)$.
\end{theorem}

\section{Computing Mappability in $\Oh(nm^k)$ Time and $\Oh(n)$ Space}\label{sec:2}
In this section, we generalize the $\Oh(nm)$-time algorithm for $k=1$ and integer alphabets from~\cite{DBLP:conf/cocoa/AlzamelCIPRS17}. To this end, we make use of an approach from~\cite{DBLP:conf/spire/AyadBCIP18}.
The high-level idea from~\cite{DBLP:conf/spire/AyadBCIP18} is to define a lexicographic order
on the suffixes of $T$ that ignores the same $k$ fixed positions of every suffix. 
(In fact, the algorithm does the same for many such combinations of $k$ positions.) It then uses the suffix tree of $T$ to sort the modified suffixes according to this new lexicographic order. 
The focus of this algorithm is not on counting substrings that are at Hamming distance at most $k$, and so we adapt it with some extra care to avoid multiple counting.

We first generate all $\binom{m}{\le k}$ subsets of $\{1, \ldots , m \}$ of size at most $k$.
For each such subset $F$, we consider the length-$m$ substrings of $T$ with their $f$-th letter substituted with $\# \not\in \Sigma$ for all $f \in F$.
We sort each of these sets of strings in $\Oh(nk\binom{m}{\le k})$ total time using the approach of~\cite{DBLP:conf/spire/AyadBCIP18}, also obtaining the maximal blocks of equal strings in the sorted list.

We now briefly describe the algorithm for sorting one such set of strings in time $\Oh(nk)$ for the sake of completeness. Let us assume for simplicity that $F=\{f\}$ as the algorithm can be generalized trivially for larger sets.
We first retrieve the sorted list of $T_i^{f-1}$ for all $i$ from the suffix tree. We then give ranks to these strings after we check equality of adjacent strings in the sorted list using $\lce$ queries.
We similarly rank strings $T_{j}^{m-f}$ for all $j$.
Finally, we sort the ranks of the pairs $(T_i^{f-1}, T_{i+f+1}^{m-f})$ using bucket sort.

Prior to running the above algorithm, we initialize arrays $D_K$ for $K \in \{1, \ldots, k\}$.
For each maximal block, of size $b$, of equal strings obtained for some set $F$, we increment the $b$ relevant entries of $D_{|F|}$ by $b-1$.

Note that if $d_H(T_i^m,T_j^m)=\kappa$, then this will contribute $\binom{m-\kappa}{K-\kappa}$ to each of $D_{K}[i]$ and $D_{K}[j]$ for $K\ge\kappa$, since there are these many size-$K$ supersets of the set of mismatching positions in the power set of $\{1, \ldots , m \}$.
We thus compute $\MP_{={K}}^m[i]=D_K[i]-\sum_{\kappa=0}^{K-1} \binom{m-\kappa}{K-\kappa} \MP_{=\kappa}[i]$ in increasing order with respect to $K$ and we are done. (We precompute all relevant binomial coefficients in $\Oh(k^2)$ time.)

\begin{theorem}\label{the:nmk}
Given a string of length $n$, the $(k,m)$-mappability problem can be solved in $\Oh(nk\binom{m}{\le k})$ time and $\Oh(n)$ space. For $k= \Oh(1)$, the time becomes $\Oh(n m^k)$.
\end{theorem}

Combining~\cref{the:nmk,the:hp} gives the following result.

\begin{corollary}
For every $k=\Oh(1)$, there exists a randomized algorithm that computes the $(k,m)$-mappability of a given length-$n$ string in $\Oh(n)$ space and in $\Oh(n \cdot \min \{m^k,\log^k n\})$ time with high probability.
\end{corollary}

\section{Computing $(k,m)$-Mappability for All $k$ or for All $m$}\label{sec:3}
\begin{theorem}
The $(k,m)$-mappability for a given $m$ and all $k\in \{0,
\ldots, m\}$ can be computed in $\Oh(n^2)$ time using $\Oh(n)$ space.
\end{theorem}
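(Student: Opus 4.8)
The plan is to reduce the problem to computing, for every pair of starting positions $i \ne j$, the Hamming distance $d_H(T_i^m, T_j^m)$, and then to accumulate a histogram: a pair at distance $k$ contributes one to $\MP_{=k}^m[i]$ (and, symmetrically, to $\MP_{=k}^m[j]$). Evaluating each distance directly in $\Oh(m)$ time would cost $\Oh(n^2 m)$, so the first goal is to bring the amortised cost per pair down to $\Oh(1)$.

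The key observation is a sliding-window identity along a fixed diagonal. For any offset $\delta \ge 1$, writing $c_i = d_H(T_i^m, T_{i+\delta}^m)$, we have
$$ c_{i+1} = c_i - [\,T[i] \ne T[i+\delta]\,] + [\,T[i+m] \ne T[i+\delta+m]\,], $$
since advancing both windows by one drops the comparison at their left ends and adds one at their right ends. Hence, after computing $c_1$ directly in $\Oh(m)$ time, all distances along one diagonal are obtained in $\Oh(n)$ total time; summing over the $\Oh(n)$ diagonals yields all pairwise distances in $\Oh(n^2)$ time.

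The remaining — and main — difficulty is the space bound: the full table of results has $\Theta(nm)$ entries and cannot be stored within $\Oh(n)$ space, while the diagonal-by-diagonal order finalises no position until every diagonal has been processed. I would therefore reorganise the computation to finalise and emit one position at a time. Concretely, I maintain an array indexed by offset, $\mathit{cur}[j] = d_H(T_i^m, T_{i+j}^m)$, holding the distances from the current window $T_i^m$ to every other length-$m$ window. Advancing $i \to i+1$ updates every stored entry in $\Oh(1)$ by the same identity applied uniformly to all offsets; a constant number of offsets change validity at the boundaries of the index range $[1,n-m+1]$, and the single offset that newly enters is initialised directly in $\Oh(m)$ time. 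At step $i$ the stored values are exactly $\{\,d_H(T_i^m, T_\ell^m) : \ell \ne i\,\}$, so one $\Oh(n)$ scan bucketises them into $\MP_{=0}^m[i], \ldots, \MP_{=m}^m[i]$, which I output before reusing the space for position $i+1$.

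Putting the pieces together gives $\Oh(nm)$ for the one-time initialisation plus the per-step $\Oh(m)$ re-initialisations, and $\Oh(n)$ per position for the update and the histogram scan, hence $\Oh(n^2)$ time in total; the only persistent storage is the offset array together with one length-$(m+1)$ histogram, i.e.\ $\Oh(n)$ working space, with the $\Theta(nm)$ result streamed out column by column. The part I expect to require the most care is exactly this space-conscious reorganisation: checking that the incremental offset update is valid for every offset that survives the window shift, and handling the $\Oh(1)$ offsets that enter or leave the valid range at each step, so that every distance is counted once and the extra bookkeeping stays within $\Oh(n^2)$.
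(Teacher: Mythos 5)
Your proposal is correct and takes essentially the same route as the paper: the paper's $\Oh(n)$-space algorithm likewise maintains a single row of the pairwise distance matrix, using the fact that $d_H(T_i^m,T_j^m)$ differs from $d_H(T_{i-1}^m,T_{j-1}^m)$ by at most two letter comparisons, and extracts the histogram $\MP_{=0}^m[i],\ldots,\MP_{=m}^m[i]$ from the current row before overwriting it. The only superficial differences are that the paper first presents an $\Oh(n^2)$-space variant (adding $\pm 1$ at the endpoints of diagonal intervals and recovering values by prefix sums along diagonals) before giving the row-by-row refinement, and that you index the current row by offset rather than by absolute position.
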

\begin{proof}
  We first present an algorithm which solves the problem
  in $\Oh(n^2)$ time using $\Oh(n^2)$
  space and then show how to reduce the space usage to $\Oh(n)$.
  
  We initialize an $n \times n$ matrix $M$ in which $M[i,j]$ will store the Hamming distance
  between substrings
  $T_i^m$ and $T_j^m$. Let us consider two letters $T[i]\neq T[j]$ of the  input string, where $i < j$. Such a pair contributes to
  a mismatch between the following 
  pairs of strings:
  \[(T_{i - m + 1}^m, T_{j - m + 1}^m),
  (T_{i - m + 2}^m, T_{j - m + 2}^m), \ldots ,
  (T_i^m, T_j^m).\] 
  This list of strings is represented by a
  diagonal interval in $M$, the entries of which we need to increment by $1$. We process all $\Oh(n^2)$ pairs of letters and
  update the information
  on the respective intervals. Then $\MP_{= k}^m[i] = |\{j\,:\,M[i,j]= k\}|$.
 
  To achieve $\Oh(1)$ time for each single addition on
  a diagonal interval,
  we use a well-known trick from an analogous problem in one dimension.
  Suppose that we would like to add $1$ on the diagonal interval from
  $M[x_1, y_1]$ to $M[x_2, y_2]$. Instead, we can simply add $1$ to $M[x_1,
  y_1]$ and $-1$ to 
  $M[x_2 + 1, y_2 + 1]$. Every cell will then
  represent the difference of its actual value to the actual value of its predecessor on the diagonal. After
  all such operations are performed, we can
  retrieve the actual values by computing prefix sums on each diagonal in a top-down manner.
  
  To reduce the space usage to $\Oh(n)$, it suffices to observe that the value of $M[i, j]$ depends only on the value of $M[i-1,j-1]$ and at most two
  letter comparisons which can add $+1$ and/or $-1$ to the cell. Recall that $M[i, j]=d_H(T_i^m,T_j^m)$. We need to subtract 1 from the previous result if the first characters of the previous substrings were equal and add 1 if the last characters of the new substrings were different. Therefore, we can process the matrix row by row, from top to bottom, and compute the values $\MP_{=0}^m[i],\ldots,\MP_{={m}}^m[i]$ while processing the $i$th row. 
  \end{proof}

\begin{theorem}\label{thm:allm}
The $(k,m)$-mappability for a given $k$ and all $m\in \{k,\ldots, n\}$ can be computed in $\Oh(n^2)$ time and space.
\end{theorem}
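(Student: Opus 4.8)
The plan is to mirror the structure of the previous theorem, but now the roles of the fixed and varying parameters are swapped: here $k$ is fixed and we want $\MP_{=k}^m$ for every window length $m=k,\ldots,n-1$. First I would set up an $n\times n$ matrix $D$ where $D[i,j]$ records, for the pair of starting positions $i<j$, how the Hamming distance of the two length-$\ell$ windows grows as $\ell$ increases. The key observation is that for a fixed diagonal offset $\delta=j-i$, the relevant quantity is a running count of mismatches: as we slide a window of increasing length starting at positions $i$ and $j=i+\delta$, a new mismatch appears exactly at those column offsets where $T[i+t]\neq T[j+t]$. So along each diagonal I would precompute, for every pair $(i,j)$ on that diagonal, the position of the $k$th and the $(k+1)$st mismatch to the right of the starting position.

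Concretely, I would process the $\Oh(n)$ diagonals of the comparison matrix one at a time. On a single diagonal, the sequence of equalities/inequalities between $T[i+t]$ and $T[j+t]$ is a binary string, and for each starting index I need the window lengths $m$ for which the number of mismatches in $T_i^m$ versus $T_j^m$ equals exactly $k$. This is precisely the half-open range of lengths between the $k$th mismatch and the $(k+1)$st mismatch (counting from the start). Hence each pair $(i,j)$ contributes to a contiguous interval of values $m$, and it contributes to $\MP_{=k}^m[i]$ and $\MP_{=k}^m[j]$ for exactly those $m$. I would accumulate these interval contributions into the answer matrix using the same one-dimensional difference-array trick as in the previous theorem: to add $1$ to a range $[m_1,m_2]$ of the array $\MP_{=k}^{\cdot}[i]$, I add $+1$ at index $m_1$ and $-1$ at index $m_2+1$, then recover the true values by a single prefix-sum pass at the end.

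The bookkeeping for locating the $k$th and $(k+1)$st mismatch on each diagonal can be done in total $\Oh(n)$ time per diagonal by scanning the diagonal once from right to left while maintaining, for the fixed constant $k$, the positions of the nearest $k+1$ mismatches ahead. Since there are $\Oh(n)$ diagonals and each is scanned in linear time, and each of the $\Oh(n^2)$ pairs produces a constant number of difference-array updates, the whole computation runs in $\Oh(n^2)$ time. The main obstacle, and the reason this variant genuinely needs $\Oh(n^2)$ space rather than the $\Oh(n)$ achievable before, is that the contributions for different values of $m$ must all be stored simultaneously: unlike the all-$k$ case where one matrix row yielded all answers for a single position, here a single pair $(i,j)$ feeds an interval of the $m$-axis, and to apply the difference-array trick over $m$ we must retain an $n\times n$ table (answers indexed by position and by length) until the final prefix-sum sweep. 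I would therefore allocate the full $\Oh(n^2)$ table of answers and perform the prefix sums along the $m$-dimension at the end, giving the claimed $\Oh(n^2)$ time and space bound.
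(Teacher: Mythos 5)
Your proposal is correct and follows essentially the same route as the paper: the paper likewise processes pairs diagonal by diagonal with an amortized $\Oh(n)$ scan per diagonal (computing $k\text{-}\lcp(i,j)$, which is exactly the right endpoint of your interval of admissible lengths), records each pair's contribution in an $n\times n$ table indexed by length and starting position, and recovers all answers by summing along the length dimension. The differences are cosmetic: you track both the $k$th and $(k+1)$st mismatch to obtain the exactly-$k$ counts via a two-sided difference array (and should also cap the interval at $n-j+1$ when the $(k+1)$st mismatch does not exist), whereas the paper marks only the $k\text{-}\lcp$ value and takes suffix sums to get the at-most-$k$ counts, and your right-to-left queue of upcoming mismatches replaces the paper's left-to-right sliding-window extension.
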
 
\begin{proof}
We first prove the following claim.
\begin{claim}
The longest common prefixes with $k$ mismatches for all pairs of suffixes of $T$ can be computed in $\Oh(n^2)$ time.
\end{claim}
\begin{proof}[of Claim]
We process the pairs in batches $B_\delta$ for $\delta\in \{1,2,\ldots,n\}$ so that the pair $(T_i,T_j)$, which we denote by $(i,j)$, is in $B_{|j-i|}$. It now suffices to show how to process a single batch $B_\delta$ in $\Oh(n)$ time. We will do so by comparing pairs of letters of $T$ at distance $\delta$ from left to right.
We first compute $\lce_k(1,1+\delta)$ naively. Then, given that $\lce_k(i,j)=\ell$, where $j-i=\delta$, we will retrieve $\lce_k(i+1,j+1)$ using the following simple observation: either $j+\ell-1=n$, or $T_i^{\ell}$ and $T_j^{\ell}$ have exactly $k$ mismatches and $T[i+\ell] \neq T[j+\ell]$. In the former case, we trivially have that $\lce_k(i+1,j+1)=\ell-1$. In the latter case, we first check whether $T[i]=T[j]$, in which case $d_H(T_{i+1}^{\ell-1},T_{j+1}^{\ell-1})=k$ and hence $\lce_k(i+1,j+1)=\ell-1$. If $T[i]\neq T[j]$, then $d_H(T_{i+1}^{\ell-1},T_{j+1}^{\ell-1})=k-1$ and we perform letter comparisons to extend the match. The pairs of letters compared in this step have not been compared before; the complexity follows. 
\end{proof}

We store the information on $\lce_k$'s as follows. We initialize an $n \times n$ matrix $Q$. Then, for a pair $(i,j)$ such that $\lce_k(i,j)=\ell$, we increment by $1$ the entries $Q[\ell,i]$ and $Q[\ell,j]$.
Note that if $\lce_k(i,j)=\ell$, then $i$ (resp.~$j$) will contribute $1$ to the $(k,m)$-mappability values $\MP^m_{\le k}[j]$ (resp.~$\MP^m_{\le k}[i]$) for all $m \in \{k, \ldots, \ell\}$.
Thus, starting from the last row of $Q$, we iteratively add row $\ell$ to row $\ell-1$. In the end, by the above observation, row $m$ stores the $(k,m)$-mappability array $\MP^m_{\le k}$. 
\end{proof}

\section{Conditional Hardness for $k,m = \Theta(\log n)$}\label{sec:4}
We will show that $(k,m)$-mappability cannot be computed in strongly
subquadratic time in case that the parameters are $\Theta(\log n)$, unless
the Strong Exponential Time Hypothesis (SETH) of Impagliazzo, Paturi and
Zane~\cite{DBLP:journals/jcss/ImpagliazzoPZ01,DBLP:journals/jcss/ImpagliazzoP01} fails. 
Our proof is based on the conditional hardness of the following decision version of the Longest Common Substring with $k$ Mismatches problem.

\defproblem{Common Substring of Length $d$ with $k$
Mismatches}{
Strings $T_1, T_2$ of length $n$ over binary alphabet and integers $k$, $d$.}{Is there a factor of $T_1$ of length $d$ that occurs in
$T_2$ with $k$ mismatches?}

\begin{lemma}[\cite{DBLP:journals/corr/abs-1712-08573}]\label{lem:LCF_hard}
  Suppose there is $\varepsilon > 0$ such that Common Substring of Length $d$ with $k$
  Mismatches can be solved in $\Oh(n^{2-\varepsilon})$ time on strings
  over binary alphabet for $k = \Theta(\log n)$ and $d=21k$. Then SETH is
  false.
\end{lemma}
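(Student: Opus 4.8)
The plan is to prove the contrapositive by a fine-grained reduction from the \emph{Orthogonal Vectors} (OV) problem: given sets $\mathcal{A},\mathcal{B}$ of $N$ Boolean vectors in $\{0,1\}^D$ with $D=\Theta(\log N)$, decide whether there exist $a\in\mathcal{A}$ and $b\in\mathcal{B}$ with $a\cdot b=\sum_{i=1}^{D} a_ib_i=0$. It is classical that, assuming SETH~\cite{DBLP:journals/jcss/ImpagliazzoPZ01,DBLP:journals/jcss/ImpagliazzoP01}, for every $\delta>0$ there is a constant $c$ such that OV with $D=c\log N$ cannot be solved in $\Oh(N^{2-\delta})$ time. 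I would encode such an OV instance into two binary strings $T_1,T_2$ together with parameters $d=21k$ and $k=\Theta(\log n)$, so that a length-$d$ factor of $T_1$ occurs in $T_2$ within $k$ mismatches if and only if the OV instance has an orthogonal pair.

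The core is a \emph{coordinate gadget}: fixed length-$3$ binary blocks $g^A_0,g^A_1$ for the $\mathcal{A}$-side and $g^B_0,g^B_1$ for the $\mathcal{B}$-side whose pairwise Hamming distances realise a baseline of $1$, raised whenever both bits are $1$. A short parity computation shows that over a binary alphabet the excess for the pattern $(1,1)$ cannot be made to equal $1$, but it can be made to equal $2$; concretely $g^A_0=\mathtt{000}$, $g^A_1=\mathtt{110}$, $g^B_0=\mathtt{100}$, $g^B_1=\mathtt{001}$ give $d_H(g^A_{a_i},g^B_{b_i})=1+2a_ib_i$. Concatenating the blocks coordinate by coordinate into \emph{vector gadgets} $V^A(a)$ and $V^B(b)$ of length $3D$ therefore yields
\[
 d_H\!\left(V^A(a),V^B(b)\right)=\sum_{i=1}^{D}\bigl(1+2a_ib_i\bigr)=D+2\,(a\cdot b),
\]
so $a\cdot b=0$ iff this distance is at most $D$, while any non-orthogonal pair is at distance at least $D+2$. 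This makes $k:=D$ a clean threshold for the \emph{intended}, vector-to-vector alignment.

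The step I expect to be the main obstacle is \emph{alignment control}: because the common-substring problem quantifies over all factors, I must ensure that the only alignments achieving at most $k$ mismatches are those matching one whole vector gadget of $T_1$ against one whole vector gadget of $T_2$. I would achieve this by interleaving the vector gadgets with \emph{guard} regions (long, Hamming-incompatible blocks over $\{0,1\}$) and by padding, so that any factor of $T_1$ that straddles a gadget boundary, or is aligned against $T_2$ with a nonzero shift, necessarily hits a guard and accumulates more than $k$ mismatches. The guard lengths are exactly what inflate the ratio between the substring length and the mismatch budget from the bare $3$ of the gadget to the required $d=21k$; verifying that every unintended alignment exceeds the threshold, for all shifts simultaneously, is the delicate part of the argument and the reason a concrete constant such as $21$ appears.

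Finally I would do the parameter bookkeeping and conclude. With $D=c\log N$ the strings have length $n=|T_1|=|T_2|=\Theta(ND)=\Theta(N\log N)$, hence $\log n=\Theta(\log N)$ and $k=D=\Theta(\log n)$, as required, and $d=21k$ by construction. An $\Oh(n^{2-\varepsilon})$-time algorithm for Common Substring of Length $d$ with $k$ Mismatches would then decide the OV instance in $\Oh\bigl((N\log N)^{2-\varepsilon}\bigr)=\Oh(N^{2-\varepsilon'})$ time for some $\varepsilon'>0$, contradicting the SETH-based lower bound for OV and hence refuting SETH.
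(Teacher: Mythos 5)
The first thing to note is that the paper never proves this lemma: it is imported as a black box from \cite{DBLP:journals/corr/abs-1712-08573}, so the only comparison point is that reference's reduction, and your plan does follow the same route --- SETH-hardness of Orthogonal Vectors, length-$3$ binary coordinate gadgets realising distance $1+2a_ib_i$, vector gadgets separated by guard/padding regions so that the factor length becomes $d=21k$, then parameter bookkeeping. Within that plan, your coordinate gadget is verifiably correct (and the parity argument showing that the excess for $(1,1)$ must be even over a binary alphabet is sound), and the final complexity accounting $n=\Theta(N\log N)$, $k=\Theta(\log n)$ is fine.

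However, there is a genuine gap at exactly the step you defer: you never construct the guards, and you never prove that every unintended alignment exceeds the mismatch budget. This is not a routine verification one may postpone --- over a binary alphabet it is the technical heart of the reduction, and the most natural instantiation of your sketch is simply false. Concretely, take $k=D$, the no-instance $\mathcal{A}=\mathcal{B}=\{(1,0,\dots,0)\}$, your gadgets, and ``long blocks'' $G=0^{18k}$ as guards, so that $T_1=0^{18k}\,\mathtt{110}\,0^{3k-3}\,0^{18k}$ and $T_2=0^{18k}\,\mathtt{001}\,(\mathtt{100})^{k-1}\,0^{18k}$. Every length-$21k$ window of $T_2$ contains exactly $k$ ones, two of them adjacent (at the junction $\mathtt{001}|\mathtt{100}$), while the factor $F=T_1[1\dd 21k]$ contains exactly two ones, adjacent. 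Aligning $F$ so that its two ones land on the two adjacent ones of $T_2$ yields an occurrence with only $k-2$ mismatches, and shifting that alignment by one position yields exactly $k$ mismatches; so under either reading of ``occurs with $k$ mismatches'' (at most $k$, or exactly $k$) the reduction reports a positive answer although no orthogonal pair exists (padding $\mathcal{A},\mathcal{B}$ with copies of the same vector makes the instance non-degenerate without changing the argument). The lesson is that guards cannot be ``posited'' as Hamming-incompatible: identical guards are needed for the intended alignments to stay within budget, and then one must engineer the guards and gadgets so that every shifted self-overlap and every guard-versus-gadget overlap forces more than $k$ mismatches, with a case analysis over all shifts. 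That construction and analysis is precisely the content of the proof in \cite{DBLP:journals/corr/abs-1712-08573}; without it, your argument establishes only the easy direction (an orthogonal pair yields a close occurrence), so it is a proof plan rather than a proof.
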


\begin{theorem}
  If the $(k,m)$-mappability can be computed in $\Oh(n^{2-\varepsilon})$
  time for binary strings, $k,m=\Theta(\log n)$, and
  some $\varepsilon>0$, then SETH is false.
\end{theorem}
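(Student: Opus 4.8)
The plan is to reduce the \emph{Common Substring of Length $d$ with $k$ Mismatches} problem to $(k,m)$-mappability. By \cref{lem:LCF_hard}, the former admits no $\Oh(n^{2-\varepsilon})$-time algorithm for binary strings with $k=\Theta(\log n)$ and $d=21k$ unless SETH fails, so it suffices to solve it in strongly subquadratic time using a hypothetical fast mappability algorithm. Given binary strings $T_1,T_2$ of length $n$, I set $m=d=21k=\Theta(\log n)$ and build the binary string $T=T_1\,\mathtt{0}^{m-1}\,T_2$ of length $\Oh(n)$. The padding block of length $m-1$ ensures that no length-$m$ window of $T$ contains characters from both $T_1$ and $T_2$. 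Consequently, a window lying entirely inside the $T_1$-block has content equal to a length-$d$ factor of $T_1$, a window inside the $T_2$-block equals a factor of $T_2$, and the instance is a YES-instance exactly when some $T_1$-window and some $T_2$-window are at Hamming distance at most $k$.

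To detect this, I classify ordered pairs of length-$m$ windows of $T$ into four groups: both inside the $T_1$-block (contributing $C_{11}$), both inside the $T_2$-block ($C_{22}$), one in each block (the \emph{cross} count $C_\times$), and pairs in which at least one window overlaps the padding. The answer is YES iff $C_\times>0$. A mappability computation on $T$ returns $\sum_i \MP^m_{\le k}[i]$, i.e.\ the total number of ordered window pairs at distance at most $k$; analogous computations on $T_1$ and on $T_2$ return $C_{11}$ and $C_{22}$, since the windows fully inside the two blocks coincide with the windows of $T_1$ and of $T_2$ respectively. Therefore $C_\times$ equals this total minus $C_{11}$, minus $C_{22}$, minus the number of close pairs that involve a padding-overlapping window.

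The crucial point is that only $\Oh(m)=\Oh(\log n)$ windows of $T$ overlap the padding. I compute all close pairs involving such a boundary window by brute force: compare each of the $\Oh(\log n)$ boundary windows against all $\Oh(n)$ windows of $T$, spending $\Oh(m)$ time per comparison, for a total of $\Oh(nm^2)=\Oh(n\log^2 n)$ time (removing the double-counted boundary--boundary pairs is routine). This is exactly what lets the construction avoid a separator symbol, which is unavailable over the binary alphabet: rather than forbidding boundary windows, I simply account for them explicitly, which is cheap because there are so few.

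Putting it together, a constant number of mappability calls on length-$\Oh(n)$ strings with $m=\Theta(\log n)$, together with an $\Oh(n\,\mathrm{polylog}\,n)$ overhead, decide the instance; if mappability were solvable in $\Oh(N^{2-\varepsilon})$ time in this regime, the whole reduction would run in $\Oh(n^{2-\varepsilon})$ time, contradicting \cref{lem:LCF_hard} and hence refuting SETH. I expect the main obstacle to be the separatorless boundary handling just described---ensuring that boundary-straddling windows neither corrupt the within-block counts nor get miscounted as cross pairs---which is what the $\Oh(\log n)$-window observation resolves. A secondary technical point is reconciling the \emph{exactly}-$k$ table $\MP^m_{=k}$ supplied by the mappability algorithm with the \emph{at most} $k$ relation I need: I obtain $\MP^m_{\le k}=\sum_{k'=0}^{k}\MP^m_{=k'}$ by summing over thresholds, invoking the hypothetical fast algorithm for the $\Theta(\log n)$-sized thresholds and the $\Oh(nm^{k'})$-time algorithm of \cref{the:nmk} for the small ones; since $k=\Theta(\log n)$ this adds only a polylogarithmic factor and keeps the running time strongly subquadratic.
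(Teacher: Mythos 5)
Your core reduction is correct, but it takes a genuinely different route from the paper's. The paper also gives a Turing reduction from Common Substring of Length $d$ with $k$ Mismatches, but it needs neither padding nor any brute-force pass: it runs the mappability algorithm on just two strings, $T_1\cdot T_2$ and $T_1\cdot T_2[1..d-1]$. The length-$d$ windows of the second string are exactly those windows of the first that are \emph{not} fully contained in $T_2$ (the ones starting at positions $1,\ldots,n$, including all boundary-straddling ones), so for each $i\le n-d+1$ the difference $A[i]-B[i]$ of the two mappability arrays is precisely the number of length-$d$ factors of $T_2$ at Hamming distance $k$ from $T_1[i..i+d-1]$, and the instance is positive iff some difference is nonzero. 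Your construction $T=T_1\,\mathtt{0}^{m-1}\,T_2$, with three mappability calls plus an $\Oh(n\log^2 n)$ brute-force treatment of the $\Oh(\log n)$ padding-straddling windows, achieves the same end by global counting instead of per-position cancellation. Both are valid strongly subquadratic Turing reductions; the paper's is leaner (two calls, no extra pass, and it even recovers per-position cross counts rather than a single bit), while yours is arguably the more obvious construction once one notices that the boundary windows are few enough to handle naively.

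There is, however, a flaw in your closing paragraph on converting $\MP^m_{=k}$ to $\MP^m_{\le k}$ --- though it concerns a detour you do not actually need. First, it is avoidable: the $(k,m)$-mappability problem referred to in the theorem is the paper's modified problem, which computes the exact-distance array $\MP^m_{=k}$, and the paper likewise reads the decision problem as asking for a pair at Hamming distance exactly $k$; your decomposition works verbatim with exact-$k$ counts, so no threshold summation is required. Second, as stated the claim that the summation ``adds only a polylogarithmic factor'' is false: \cref{the:nmk} costs $\Oh\bigl(n\binom{m}{k'}2^{k'}\bigr)$, which for an intermediate threshold such as $k'=\delta\log n$ is $n^{1+\Theta(\delta\log(1/\delta))}$, a polynomial rather than polylogarithmic overhead, while the hypothetical fast algorithm is only assumed for parameters that are $\Theta(\log n)$, so it cannot be invoked for thresholds $k'$ that are $\omega(1)$ but $o(\log n)$. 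The argument can be patched --- fix a small constant $\delta$, use \cref{the:nmk} for all $k'\le\delta\log n$ (total cost $n^{1+\gamma}\log n$ with $\gamma$ small) and the assumed algorithm for $k'\in[\delta\log n,k]$ --- but note that the patched version requires the hypothetical algorithm to work for a whole range of $\Theta(\log n)$-sized thresholds, a strictly stronger hypothesis than in the paper's proof, which invokes it only for the single parameter pair $(k,d)$ supplied by \cref{lem:LCF_hard}.
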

\begin{proof}
  We make a Turing reduction from Common Substring of Length $d$ with $k$ Mismatches.
  Let $T_1$ and $T_2$ be the input to the problem. We compute the
  $(k,d)$-mappabilities of strings $T_1\cdot T_2$ and $T_1\cdot T_2[1 \dd d-1]$ and store them in arrays $A$
  and $B$, respectively. For each $i\in \{1,\ldots,n-d+1\}$, we
  subtract $B[i]$ from $A[i]$. Then, $A[i]$ holds the number of
  factors of $T_2$ of length $d$ that are at Hamming distance $k$ from
  $T_1[i \dd i+d-1]$. Hence, Common Substring of Length $d$ with $k$ Mismatches has a positive answer if and only if $A[i]>0$ for any $i\in \{1,\ldots,n-d+1\}$.
  
  By Lemma~\ref{lem:LCF_hard}, an $\Oh(n^{2-\varepsilon})$-time algorithm
  for Common Substring of Length $d$ with $k$ Mismatches with $k=\Theta(\log n)$ and
  $d=21k$ would refute SETH.\@ By the shown reduction, an
  $\Oh(n^{2-\varepsilon})$-time algorithm for $(k,m)$-mappability with
  $k,m = \Theta(\log n)$ would also refute SETH.
\end{proof}

\section{Final Remarks}

Our main contribution is an $\Oh(n \cdot \min\{m^k,\log^k n\})$-time $\Oh(n)$-space algorithm for solving the $(k,m)$-mappability problem. Let us recall that genome mappability, as introduced in~\cite{biopaper}, counts the number of substrings that are at Hamming distance at most $k$ from every length-$m$ substring of the text. One may also be interested to consider mappability under the edit distance model. This question relates also to recent contributions on computing approximate longest common prefixes and substrings under edit distance~\cite{DBLP:conf/recomb/ThankachanACA18,DBLP:conf/spire/AyadBCIP18}. In the case of the edit distance, in particular, a decision needs to be made whether sufficiently similar substrings only of length exactly $m$ or of all lengths between $m-k$ and $m+k$ should be counted. We leave the mappability problem under edit distance for future investigation.

\bibliographystyle{plainurl}
\bibliography{references}

\begin{thebibliography}{10}

\bibitem{DBLP:conf/sofsem/AlamroACIP18}
Hayam Alamro, Lorraine A.~K. Ayad, Panagiotis Charalampopoulos, Costas~S.
  Iliopoulos, and Solon~P. Pissis.
\newblock Longest common prefixes with $k$-mismatches and applications.
\newblock In {\em Current Trends in Theory and Practice of Computer Science,
  {SOFSEM} 2018}, volume 10706 of {\em LNCS}, pages 636--649. Springer, 2018.
\newblock URL: \url{https://doi.org/10.1007/978-3-319-73117-9_45}.

\bibitem{DBLP:conf/spire/AlzamelCIKPRS18}
Mai Alzamel, Panagiotis Charalampopoulos, Costas~S. Iliopoulos, Tomasz
  Kociumaka, Solon~P. Pissis, Jakub Radoszewski, and Juliusz Straszyński.
\newblock Efficient computation of sequence mappability.
\newblock In {\em String Processing and Information Retrieval - 25th
  International Symposium, {SPIRE} 2018}, volume 11147 of {\em LNCS}, pages
  12--26. Springer, 2018.
\newblock URL: \url{https://doi.org/10.1007/978-3-030-00479-8_2}.

\bibitem{DBLP:conf/cocoa/AlzamelCIPRS17}
Mai Alzamel, Panagiotis Charalampopoulos, Costas~S. Iliopoulos, Solon~P.
  Pissis, Jakub Radoszewski, and Wing{-}Kin Sung.
\newblock Faster algorithms for 1-mappability of a sequence.
\newblock {\em Theoretical Computer Science}, 812:2--12, 2020.
\newblock URL: \url{https://doi.org/10.1016/j.tcs.2019.04.026}.

\bibitem{barilan_kmap}
Amihood Amir, Itai Boneh, and Eitan Kondratovsky.
\newblock The k-mappability problem revisited.
\newblock In {\em 32nd Annual Symposium on Combinatorial Pattern Matching,
  {CPM} 2021}, LIPIcs. Schloss Dagstuhl - Leibniz-Zentrum f{\"{u}}r Informatik,
  2021.

\bibitem{ITAB2009}
Pavlos Antoniou, Jackie~W. Daykin, Costas~S. Iliopoulos, Derrick Kourie,
  Laurent Mouchard, and Solon~P. Pissis.
\newblock Mapping uniquely occurring short sequences derived from high
  throughput technologies to a reference genome.
\newblock In {\em Information Technology and Applications in Biomedicine, ITAB
  2009}. {IEEE}, 2009.
\newblock URL: \url{https://doi.org/10.1109/itab.2009.5394394}.

\bibitem{DBLP:conf/spire/AyadBCIP18}
Lorraine A.~K. Ayad, Carl Barton, Panagiotis Charalampopoulos, Costas~S.
  Iliopoulos, and Solon~P. Pissis.
\newblock Longest common prefixes with k-errors and applications.
\newblock In {\em String Processing and Information Retrieval - 25th
  International Symposium, {SPIRE} 2018}, volume 11147 of {\em LNCS}, pages
  27--41. Springer, 2018.
\newblock URL: \url{https://doi.org/10.1007/978-3-030-00479-8_3}.

\bibitem{DBLP:conf/latin/BenderF02}
Michael~A. Bender and Martin Farach{-}Colton.
\newblock The level ancestor problem simplified.
\newblock In {\em {LATIN} 2002: Theoretical Informatics, 5th Latin American
  Symposium}, volume 2286 of {\em LNCS}, pages 508--515. Springer, 2002.
\newblock URL: \url{https://doi.org/10.1007/3-540-45995-2_44}.

\bibitem{Carrico2018}
Jo{\~a}o~A. Carri{\c{c}}o, Maxime Crochemore, Alexandre~P. Francisco, Solon~P.
  Pissis, Bruno Ribeiro-Gon{\c{c}}alves, and C{\'a}tia Vaz.
\newblock Fast phylogenetic inference from typing data.
\newblock {\em Algorithms for Molecular Biology}, 13(1):4, Feb 2018.
\newblock URL: \url{https://doi.org/10.1186/s13015-017-0119-7}.

\bibitem{DBLP:conf/cpm/Charalampopoulos18}
Panagiotis Charalampopoulos, Maxime Crochemore, Costas~S. Iliopoulos, Tomasz
  Kociumaka, Solon~P. Pissis, Jakub Radoszewski, Wojciech Rytter, and Tomasz
  Waleń.
\newblock Linear-time algorithm for long {LCF} with $k$ mismatches.
\newblock In {\em Combinatorial Pattern Matching, {CPM} 2018}, volume 105 of
  {\em LIPIcs}, pages 23:1--23:16. Schloss Dagstuhl - Leibniz-Zentrum f{\"{u}}r
  Informatik, 2018.
\newblock URL: \url{https://doi.org/10.4230/LIPIcs.CPM.2018.23}.

\bibitem{DBLP:conf/stoc/ColeGL04}
Richard Cole, Lee{-}Ad Gottlieb, and Moshe Lewenstein.
\newblock Dictionary matching and indexing with errors and don't cares.
\newblock In L{\'{a}}szl{\'{o}} Babai, editor, {\em 36th Annual {ACM} Symposium
  on Theory of Computing, {STOC} 2004}, pages 91--100. {ACM}, 2004.
\newblock URL: \url{https://doi.org/10.1145/1007352.1007374}.

\bibitem{crochemore_et_al:LIPIcs:2017:7652}
Maxime Crochemore, Alexandre~P. Francisco, Solon~P. Pissis, and C{\'a}tia Vaz.
\newblock {Towards Distance-Based Phylogenetic Inference in Average-Case
  Linear-Time}.
\newblock In {\em 17th International Workshop on Algorithms in Bioinformatics
  (WABI 2017)}, volume~88 of {\em LIPIcs}, pages 9:1--9:14. Schloss Dagstuhl -
  Leibniz-Zentrum f{\"{u}}r Informatik, 2017.
\newblock URL: \url{https://doi.org/10.4230/LIPIcs.WABI.2017.9}.

\bibitem{biopaper}
Thomas Derrien, Jordi Estell{\'{e}}, Santiago~Marco Sola, David~G. Knowles,
  Emanuele Raineri, Roderic Guig{\'{o}}, and Paolo Ribeca.
\newblock Fast computation and applications of genome mappability.
\newblock {\em {PLoS} {ONE}}, 7(1):e30377, 2012.
\newblock URL: \url{https://doi.org/10.1371/journal.pone.0030377}.

\bibitem{DBLP:conf/icalp/DietzfelbingerH90}
Martin Dietzfelbinger and Friedhelm~Meyer auf~der Heide.
\newblock A new universal class of hash functions and dynamic hashing in real
  time.
\newblock In {\em Automata, Languages and Programming, 17th International
  Colloquium, ICALP 1990}, volume 443 of {\em LNCS}, pages 6--19. Springer,
  1990.
\newblock URL: \url{https://doi.org/10.1007/BFb0032018}.

\bibitem{DBLP:conf/focs/Farach97}
Martin Farach.
\newblock Optimal suffix tree construction with large alphabets.
\newblock In {\em 38th {IEEE} Annual Symposium on Foundations of Computer
  Science, {FOCS} 1997}, pages 137--143. {IEEE} Computer Society, 1997.
\newblock URL: \url{https://doi.org/10.1109/SFCS.1997.646102}.

\bibitem{Fonseca}
Nuno~A. Fonseca, Johan Rung, Alvis Brazma, and John~C. Marioni.
\newblock Tools for mapping high-throughput sequencing data.
\newblock {\em Bioinformatics}, 28(24):3169--3177, 2012.
\newblock URL: \url{https://doi.org/10.1093/bioinformatics/bts605}.

\bibitem{Francisco2009}
Alexandre~P. Francisco, Miguel Bugalho, M{\'a}rio Ramirez, and Jo{\~a}o~A.
  Carri{\c{c}}o.
\newblock Global optimal {eBURST} analysis of multilocus typing data using a
  graphic matroid approach.
\newblock {\em BMC Bioinformatics}, 10(1):152, May 2009.
\newblock URL: \url{https://doi.org/10.1186/1471-2105-10-152}.

\bibitem{DBLP:journals/tcs/GalilG87}
Zvi Galil and Raffaele Giancarlo.
\newblock Parallel string matching with k mismatches.
\newblock {\em Theoretical Computer Science}, 51:341--348, 1987.
\newblock URL: \url{https://doi.org/10.1016/0304-3975(87)90042-9}.

\bibitem{Gog:2016:FCH:2911451.2911523}
Simon Gog and Rossano Venturini.
\newblock Fast and compact {H}amming distance index.
\newblock In {\em Proceedings of the 39th International ACM SIGIR Conference on
  Research and Development in Information Retrieval, SIGIR 2016}, pages
  285--294. ACM, 2016.
\newblock URL: \url{https://doi.org/10.1145/2911451.2911523}.

\bibitem{Grabowski}
Szymon Grabowski and Tomasz~M. Kowalski.
\newblock Algorithms for all-pairs hamming distance based similarity.
\newblock {\em Software: Practice and Experience}, 2021.
\newblock URL: \url{https://onlinelibrary.wiley.com/doi/abs/10.1002/spe.2978}.

\bibitem{DBLP:conf/iccabs/HooshmandAGAT18}
Sahar Hooshmand, Paniz Abedin, Daniel Gibney, Srinivas Aluru, and Sharma~V.
  Thankachan.
\newblock Faster computation of genome mappability with one mismatch.
\newblock In {\em 8th {IEEE} International Conference on Computational Advances
  in Bio and Medical Sciences, {ICCABS} 2018}, page~1. {IEEE} Computer Society,
  2018.
\newblock URL: \url{https://doi.org/10.1109/ICCABS.2018.8541897}.

\bibitem{DBLP:journals/jcss/ImpagliazzoP01}
Russell Impagliazzo and Ramamohan Paturi.
\newblock On the complexity of $k$-{SAT}.
\newblock {\em Journal of Computer and System Sciences}, 62(2):367--375, 2001.
\newblock URL: \url{https://doi.org/10.1006/jcss.2000.1727}.

\bibitem{DBLP:journals/jcss/ImpagliazzoPZ01}
Russell Impagliazzo, Ramamohan Paturi, and Francis Zane.
\newblock Which problems have strongly exponential complexity?
\newblock {\em Journal of Computer and System Sciences}, 63(4):512--530, 2001.
\newblock URL: \url{https://doi.org/10.1006/jcss.2001.1774}.

\bibitem{DBLP:journals/jacm/KarkkainenSB06}
Juha K{\"{a}}rkk{\"{a}}inen, Peter Sanders, and Stefan Burkhardt.
\newblock Linear work suffix array construction.
\newblock {\em Journal of the {ACM}}, 53(6):918--936, 2006.
\newblock URL: \url{https://doi.org/10.1145/1217856.1217858}.

\bibitem{DBLP:journals/ibmrd/KarpR87}
Richard~M. Karp and Michael~O. Rabin.
\newblock Efficient randomized pattern-matching algorithms.
\newblock {\em IBM Journal of Research and Development}, 31(2):249--260, 1987.
\newblock URL: \url{https://doi.org/10.1147/rd.312.0249}.

\bibitem{DBLP:conf/cpm/KasaiLAAP01}
Toru Kasai, Gunho Lee, Hiroki Arimura, Setsuo Arikawa, and Kunsoo Park.
\newblock Linear-time longest-common-prefix computation in suffix arrays and
  its applications.
\newblock In {\em Combinatorial Pattern Matching, {CPM} 2001}, volume 2089 of
  {\em LNCS}, pages 181--192. Springer, 2001.
\newblock URL: \url{https://doi.org/10.1007/3-540-48194-X_17}.

\bibitem{DBLP:journals/corr/abs-1712-08573}
Tomasz Kociumaka, Jakub Radoszewski, and Tatiana~A. Starikovskaya.
\newblock Longest common substring with approximately k mismatches.
\newblock {\em Algorithmica}, 81(6):2633--2652, 2019.
\newblock URL: \url{https://doi.org/10.1007/s00453-019-00548-x}.

\bibitem{DBLP:journals/tcs/LandauV86}
Gad~M. Landau and Uzi Vishkin.
\newblock Efficient string matching with k mismatches.
\newblock {\em Theoretical Computer Science}, 43:239--249, 1986.
\newblock URL: \url{https://doi.org/10.1016/0304-3975(86)90178-7}.

\bibitem{DBLP:journals/siamcomp/ManberM93}
Udi Manber and Eugene~W. Myers.
\newblock Suffix arrays: {A} new method for on-line string searches.
\newblock {\em {SIAM} Journal on Computing}, 22(5):935--948, 1993.
\newblock URL: \url{https://doi.org/10.1137/0222058}.

\bibitem{DBLP:conf/spire/Manzini15}
Giovanni Manzini.
\newblock Longest common prefix with mismatches.
\newblock In Costas~S. Iliopoulos, Simon~J. Puglisi, and Emine Yilmaz, editors,
  {\em String Processing and Information Retrieval, {SPIRE} 2015}, volume 9309
  of {\em LNCS}, pages 299--310. Springer, 2015.
\newblock URL: \url{https://doi.org/10.1007/978-3-319-23826-5_29}.

\bibitem{MAKINEN201917}
Veli Mäkinen and Tuukka Norri.
\newblock Applying the positional {B}urrows–{W}heeler transform to all-pairs
  {H}amming distance.
\newblock {\em Information Processing Letters}, 146:17--19, 2019.
\newblock URL: \url{https://doi.org/10.1016/j.ipl.2019.02.003}.

\bibitem{DBLP:conf/recomb/ThankachanACA18}
Sharma~V. Thankachan, Chaitanya Aluru, Sriram~P. Chockalingam, and Srinivas
  Aluru.
\newblock Algorithmic framework for approximate matching under bounded edits
  with applications to sequence analysis.
\newblock In {\em Research in Computational Molecular Biology, {RECOMB} 2018},
  volume 10812 of {\em LNCS}, pages 211--224. Springer, 2018.
\newblock URL: \url{https://doi.org/10.1007/978-3-319-89929-9_14}.

\bibitem{DBLP:journals/jcb/ThankachanAA16}
Sharma~V. Thankachan, Alberto Apostolico, and Srinivas Aluru.
\newblock A provably efficient algorithm for the \emph{k}-mismatch average
  common substring problem.
\newblock {\em Journal of Computational Biology}, 23(6):472--482, 2016.
\newblock URL: \url{https://doi.org/10.1089/cmb.2015.0235}.

\end{thebibliography}

\end{document}